\DeclareTextFontCommand{\emph}{\color{teal}\em}
\newcommand{\N}{\ensuremath{\mathds{N}}}
\newcommand{\R}{\ensuremath{\mathds{R}}}
\newcommand{\Sp}{\ensuremath{\mathcal{S}}}
\newcommand{\dist}[2]{\ensuremath{\left\lVert#1-#2\right\rVert}}
\newcommand{\geo}[2]{\ensuremath{\mathrm{d}_{\gamma}(#1,#2)}}
\title{Geometric Realizations 
of
Dichotomous Ordinal Graphs} 
\titlerunning{Geometric Realizations of Dichotomous Ordinal Graphs} %TODO optional, please use if title is longer than one line
\author{Patrizio Angelini}{John Cabot University, Rome, Italy}{pangelini@johncabot.edu}{https://orcid.org/0000-0002-7602-1524}{}
\author{Sabine Cornelsen}{University of Konstanz, Germany}{sabine.cornelsen@uni-konstanz.de}{https://orcid.org/0000-0002-1688-394X}{}
\author{Carolina Haase}{Trier University, Germany}{haasec@uni-trier.de}{https://orcid.org/0000-0001-6696-074X}{}
\author{Michael Hoffmann}{Department of Computer Science, ETH Z\"urich, Switzerland \and\url{https://people.inf.ethz.ch/hoffmann/}}{hoffmann@inf.ethz.ch}{https://orcid.org/0000-0001-5307-7106}{}
\author{Eleni Katsanou}{National Technical University of Athens, Greece}{ekatsanou@mail.ntua.gr}{https://orcid.org/0000-0002-1001-1411}{}
\author{Fabrizio Montecchiani}{University of Perugia, Italy}{fabrizio.montecchiani@unipg.it}{https://orcid.org/0000-0002-0543-8912}{}
\author{Raphael Steiner}{ETH Z{\"u}rich, Switzerland}{raphaelmario.steiner@inf.ethz.ch}{https://orcid.org/0000-0002-4234-6136}{}
\author{Antonios Symvonis}{National Technical University of Athens, Greece}{symvonis@math.ntua.gr}{https://orcid.org/0000-0002-0280-741X}{}
\authorrunning{Angelini, Cornelsen, Haase, Hoffmann, Katsanou, Montecchiani, Steiner, Symvonis} 
\keywords{Ordinal embeddings, geometric graphs, graph representations} %TODO mandatory; please add comma-separated list of keywords
\begin{document}

\maketitle

\begin{abstract}
A \emph{dichotomous ordinal graph} consists of an undirected graph with a partition of the edges into \emph{short} and \emph{long} edges. 
A \emph{geometric realization} of a dichotomous ordinal graph~$G$ in a metric space~$X$ is a drawing of~$G$ in~$X$ in which every long edge is strictly longer than every short edge.
We call a graph~$G$ \emph{pandichotomous} in~$X$ if~$G$ admits a geometric realization in~$X$ for every partition of its edge set into short and long edges. 

We exhibit a very close relationship between the degeneracy of a graph~$G$ and its \emph{pandichotomic Euclidean or spherical dimension}, that is,  
 the smallest dimension~$k$ such that~$G$ is pandichotomous in~$\R^k$ or the sphere~$\Sp^k$, respectively. First, every~$d$-degenerate graph is pandichotomous in~$\R^{d}$ and~$\Sp^{d-1}$ and these bounds are tight for the sphere and for~$\R^2$ and almost tight for~$\R^d$, for~$d\ge 3$. Second, \emph{every} $n$-vertex graph that is pandichotomous in~$\R^k$ has at most $\mu kn$ edges, for some absolute constant $\mu<7.23$. This shows that the pandichotomic Euclidean dimension of any graph is linearly tied to its degeneracy and in the special cases $k\in \{1,2\}$ resolves open problems posed by Alam, Kobourov, Pupyrev, and Toeniskoetter.
  %less than $7.35dn$ edges.\todo{SC: We do not refer to the $c$ anymore here so we do not need it. MH: I still would keep the original sentence. Writing it like this emphasizes the concrete constant. The other way of phrasing it emphasizes that there exists a constant, while mentioning the constant in passing only. This is the message I'd want to convey.}
 %at most~$cdn$ edges, for some absolute constant~$c<7.35$. 
 %As a consequence, we prove that the pandichotomic Euclidean dimension of a graph is linearly tied to its degeneracy.

Further, we characterize which complete bipartite graphs are pandichotomous in~$\R^2$: 
These are exactly the~$K_{m,n}$ with~$m\le 3$ or $m=4$ and~$n\le 6$.
For general bipartite graphs, we can guarantee realizations in~$\R^2$ if the short or the long subgraph is  constrained: namely if the short subgraph is outerplanar or a subgraph of a rectangular grid, or if the long subgraph forms a caterpillar.
 \end{abstract}

\section{Introduction}

%Closely related to the topic we study in this paper is the notion of 
For an \emph{ordinal embedding}, we are given a set of objects~$x_1,\dots,x_n$ in an abstract~space, together with a set of \emph{ordinal constraints} of the form $\mathrm{dist}(x_i,x_j) < \mathrm{dist}(x_k,x_\ell)$. The objective is to compute a set of points~$p_1,\dots,p_n$ in a metric space 
%% the $d$-dimensional Euclidean space $\R^d$ 
%%such that, by 
while preserving as many ordinal constraints as possible.
%, it returns a good approximation of the displacement of~$x_1,\dots,x_n$. 
Ordinal embeddings were first studied in the 1960's by Shepard~\cite{Shepard1962,Shepard1962-2} and Kruskal~\cite{kruskal-msogfnh-64,Kruskal1964} in the context of psychometric data analysis. Recently, there have been applications in the field of Machine Learning~\cite{DBLP:conf/icml/TeradaL14}. The computation of ordinal embeddings is also known in the literature as \emph{non-metric multi-dimensional scaling.} For an extensive literature review on ordinal embeddings refer to% the survey by Chennuru Vankadara, Lohaus, Haghiri, Ul~Wahab, and von Luxburg
~\cite{VankadaraLHWL23}.

Of particular interest in relation to our work is the application of ordinal embeddings to the problem of recognizing %Euclidean 
multidimensional preferences~\cite{Bennett1960,DBLP:journals/jal/DoignonF94,peters:aaai17} in the field of Computational Social Science.
%\todo{FM: I'd remove the last s from Sciences}. 
The objects are either \emph{voters} or \emph{options}, which, together with the ordinal constraints (i.e., the voters' preferences), naturally define a bipartite graph. However, 
%in recognizing Euclidean Multidimensional preferences 
the goal is to find an embedding in $\R^d$ where all constraints are satisfied rather than to seek for an approximation. 
Efficient algorithms exist when $d=1$~\cite{DBLP:journals/scw/ChenPW17,DBLP:journals/jal/DoignonF94}, while for any $d \geq 2$ the problem is as hard as the existential theory of the reals~\cite{peters:aaai17}. 
In the simplest case, the preferences form a \emph{dichotomy}, that is, a voter may either support or reject an option~\cite{DBLP:conf/ijcai/ElkindL15,peters:aaai17}.

%The \emph{dichotomous} case, where a voter either likes or dislikes a preference, has also been studied.
%\todo{FM: any  result of interest for us to report from these two papers? PA: We cited many results from 10. Ref. 6 has related results but difficult to classify in our setting.} 
%Note that, in this setting, an embedding that employs short and long edges can fully represent the likeness/dislikeness of voters to alternatives. This is precisely the problem this paper is devoted to.

This setting is naturally modeled as a \emph{dichotomous ordinal graph}, which
consists of an undirected graph~$G=(V, E_s \cup E_\ell)$ together with a partition of the edges into a set~$E_s$ of \emph{short} edges and a set $E_\ell$ of \emph{long} edges. 
A \emph{geometric realization} of a dichotomous ordinal graph~$G$ is a drawing~$\Gamma$ of~$G$ in some metric space along with a threshold $\delta > 0$ such that the short edges of~$G$ are exactly those that have length at most~$\delta$ in~$\Gamma$. In this work we consider two natural classes of drawings in which edges are drawn as geometrically shortest paths: (1)~straight-line drawings in the Euclidean space~$\R^d$ and (2)~geodesic drawings on the sphere~$\Sp^d$. \cref{fig:intro} shows two straight-line drawings of the same dichotomous ordinal graph in~$\R^2$. The drawing in~(a) is a valid geometric realization, whereas the drawing in~(b) is not.

\begin{figure}[htbp]
    \centering
    \begin{minipage}[t]{.45\textwidth}
    \centering
    \includegraphics[page=1]{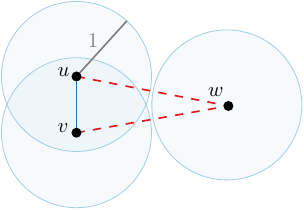}
    \subcaption{valid drawing of $\triangle uvw$}
    \end{minipage}
    \hfill
    \begin{minipage}[t]{.45\textwidth}
    \centering
    \includegraphics[page=2]{intro}
    \subcaption{invalid drawing of $\triangle uvw$}
    \end{minipage}
	\caption{%valid (a) and invalid (b) drawing of a 
    A dichotomous ordinal triangle. % $\triangle uvw$. 
    Short edges are shown blue/solid, long edges red/dashed.}    
    %%; short edge $uv$ (blue) and long edges $uw$ and $vw$ (red/dashed).}
	\label{fig:intro}
\end{figure}

A dichotomous ordinal graph may or may not be realizable in any particular space. For instance, it is easy to see that~$K_4$ where the short edges induce a star cannot be realized in~$\R^1$.
In general, 
%\subparagraph{Related Results.} % from the Literature}
it is  NP-hard to decide whether a dichotomous ordinal graph admits a geometric realization in~$\R^2$, even if the underlying graph is a complete graph and the short edges induce a planar graph~\cite[Lemma~1]{alam_etal:wg15}. 
%%or the underlying graph is a complete bipartite graph~\cite[Theorem~4]{peters:aaai17}. 
According to Peters~\cite[Theorem~4]{peters:aaai17} the problem is $\exists\R$-complete for bipartite graphs.\footnote{But even the full version of the paper~\cite{p-rmep-16} does not contain an explicit proof but only refers to the work of Kang and M{\"u}ller~\cite[Theorem~1]{km-sdprg-12}. The graph constructed in this proof is not bipartite. While it is plausible that the construction could be adapted, it does not seem immediately obvious to us.}
%%In the latter case, the problem is even known to be $\exists\R$-complete~\cite{peters:aaai17}. 

In light of the motivation, it is desirable that a geometric realization always exists, no matter the preferences of the voters. We call a %n undirected 
graph~$G=(V,E)$ \emph{pandichotomous} in a metric space~$X$ if~$G$ admits a geometric realization in~$X$ for every partition of~$E$ into short and long edges. 
Graphs that are pandichotomous in $\mathds R^1$ are also called total-threshold-colorable~\cite{alamCFKKPT:dam17,alamKPT:FUN14} %\todo{SC: I added two out of the suggested three references here. The third is an arxiv paper with submission year 2016, but on the pdf 2021. } 
or total weak unit interval graphs \cite{alam_etal:wg15}. %\todo{SC: Reviewer 1 also wants us to mention trees here. Should we? I'd rather not.}
Angelini, Bekos, Gronemann, and Symvonis~\cite{AngeliniBGS19} exhibited some graph classes that are pandichotomous in~$\R^2$, such as \emph{double-wheels} (a cycle and two additional vertices connected to all vertices of the cycle),
%investigated for which graphs $G$ any ordered partition of the edges admits a geometric realization in~$\R^2$ as a dichotomous ordinal graph. This is the case if $G$ is a \emph{double-wheel} (a simple cycle
%, called rim, and two additional vertices connected to all vertices of the cycle),
%each of them being adjacent to every vertex of the rim), 
\emph{2-degenerate} %\todo{SC: Reviewer 1 also wants us to mention 2-trees. But 2-dgenerate is more general, right? } 
graphs (can be reduced to the empty graph by repeatedly removing a vertex of degree at most two),
and \emph{subcubic} graphs %(maximum 
(vertex degree at most three).
% , and 4-colorable graphs where the short edges induce a \emph{caterpillar} (tree such that the removal of degree one vertices yields a path). 
On the negative side, 
%Angelini et al.~\cite{AngeliniBGS19} 
%they~\cite{AngeliniBGS19} proved that 
there exists a dichotomous ordinal graph~$G$ whose underlying graph can be obtained from a double-wheel by adding a single edge
%if~$G$ is the double-wheel plus one edge, \todo{SC: under some circumstances: the rim has odd length and the additional edge is between the two non-rim vertices.} 
%between the two additional vertices has no geometric realization if the cycle has odd length and the long edges are the edges of the cycle plus the edge between the two additional vertices.
%the double-wheel plus one edge between the two additional vertices has no geometric realization if the rim has odd length and the long edges are the edges of the rim plus the edge between the two additional vertices.
%then there exists a partition of the edge set of $G$ into short and long edges 
such that~$G$ does not admit a geometric realization in~$\R^2$~\cite{AngeliniBGS19}. 
% as a dichotomous ordinal graph.
Clearly, being pandichotomous is a \emph{monotone} graph property, that is, if a graph~$G$ is pandichotomous in~$X$, then so is every subgraph of~$G$. For~$X=\R^d$, being pandichotomous is also closed under taking disjoint unions, but for~$X=\Sp^d$ this is not immediate.

A related question is the existence of a realization of a graph as a \emph{unit disk graph}, where vertices are represented by unit disks, and they are connected by an edge if and only if the corresponding disks intersect. More generally, the \emph{sphericity}~\cite{h-cdgacmc-82,m-sgs-84} of a graph~$G$ is the smallest~$d$ such that~$G$ can be realized as an intersection graph of unit balls in~$\R^d$. The main difference between geometric realizations of dichotomous ordinal graphs and unit ball realizations lies in the different types of edges. In a unit disk realization there are only two types: edge and non-edge, and all of them have to be faithfully represented. In a geometric realization of dichotomous ordinal graphs there are three types of edges: long, short, and non-edges, and we have no constraints concerning the last type. Therefore, only upper bounds on the sphericity carry over to the dichotomous setting. %, while lower bounds typically do not. 
Since %For instance, %it is well known~\cite{m-sgs-84} that 
every graph on~$n$ vertices has sphericity at most~$n$~\cite{m-sgs-84}, %. Thus, 
every graph on~$n$ vertices is also pandichotomous in~$\R^n$. So for every finite graph~$G$ we can define its \emph{pandichotomic Euclidean dimension}~$\mathrm{ped}(G)$ to be the smallest dimension~$d$ such that~$G$ is pandichotomous in~$\R^d$. Analogously, we define the \emph{pandichotomic spherical dimension}~$\mathrm{psd}(G)$. Trivially, $\mathrm{ped}(G)\le \mathrm{psd}(G)+1$.

\subparagraph{Results.} In this paper, we initiate the study of the pandichotomic dimension of graphs, with a focus on the bipartite case. In \cref{sec:completeBipartite}, we characterize the complete bipartite graphs that are pandichotomous in~$\R^2$. Specifically, we show that~$K_{m,n}$ is pandichotomous in~$\R^2$ if and only if either (1)~$m\le 3$ or (2)~$m=4$ and~$n\le 6$.

%\begin{theorem}\label{thm:kmnr2}
%  The complete bipartite graph
%\end{theorem}

A dichotomous ordinal graph has two natural induced subgraphs: The \emph{short} subgraph, % is the subgraph 
induced by the short edges, and the \emph{long} subgraph, %is the subgraph 
induced by the long edges. In \cref{sec:outerplanar}, we show that if either of these subgraphs is sufficiently constrained, then a realization in~$\R^2$ always exists. Specifically, this is the case if (1)~the graph is bipartite and the short subgraph is outerplanar (\cref{thm:bipartiteOuter}); (2)~the short subgraph forms a subgraph of a rectangular grid (\cref{thm:grid}); or (3)~the long subgraph forms a caterpillar (\cref{thm:long}). In (1) and (2), we can also ensure that the short subgraph is realized without edge crossings. However, there are bipartite dichotomous ordinal graphs that do not admit a geometric realization in~$\R^2$, even though the short subgraph is planar (see, e.g., \cref{thm:k47}). 

In \cref{sec:3gen}, we study the pandichotomic Euclidean dimension and show that it is very closely related to the degeneracy. % of the graph. 
A graph is~\emph{$k$-degenerate} if it can be constructed starting from the empty graph by iteratively applying the following operation: Add a new vertex and make it adjacent to at most~$k$ existing vertices. The \emph{degeneracy}~$\mathrm{d}(G)$ of a graph~$G$ is the smallest~$k$ such that~$G$ is~$k$-degenerate.

%On the one hand, w
We show that all~$d$-degenerate graphs are pandichotomous on~$\Sp^{d-1}$ and in~$\R^d$, for~$d\ge 2$ (\cref{thm:3genr3}). In particular, it follows that all bipartite planar graphs are pandichotomous on~$\Sp^{2}$ and in~$\R^3$ (\cref{cor:planar-on-sphere}). Our bounds imply~$\mathrm{ped}(G)\le\mathrm{d}(G)$ and~$\mathrm{psd}(G)\le\mathrm{d}(G)-1$, for every graph~$G$ with~$\mathrm{d}(G)\ge 2$. We also show that these bounds are tight for the sphere (\cref{thm:3genr3-ub-s}) and for~$\R^2$ (\cref{thm:3deg2}) and almost tight for~$\R^d$, for~$d\ge 3$ (\cref{thm:3genr3-ub}). 

%As another main result, we 
We also
show that \emph{every} $n$-vertex graph that is pandichotomous in~$\R^{d}$ has at most~$\mu dn$ edges, for some 
%absolute 
constant~$\mu<7.23$ (\cref{thm:universalindspace}), and this bound is optimal up to the value of $\mu$. In the special cases $d\in \{1,2\}$, this affirmatively answers two open problems posed explicitly by Alam, Kobourov, Pupyrev, and Toeniskoetter~\cite{alam_etal:wg15}. Consequently, $\mathrm{d}(G)/(2\mu)\le\mathrm{ped}(G)\le\mathrm{d}(G)$ and~$\mathrm{d}(G)/(2\mu)-1\le\mathrm{psd}(G)\le\mathrm{d}(G)-1$, for every graph~$G$ (\cref{cor:cdbound}). In other words, the pandichotomic Euclidean and spherical dimensions are linearly tied to the degeneracy. 
As a direct consequence of these bounds (\cref{cor:3genr3-ub}), we determine up to a constant factor the smallest dimension $d$ for which every $n$-vertex (bipartite) dichotomous ordinal graph is realizable in $\R^d$ (or $\mathcal{S}^{d-1}$). 

\section{Preliminaries}

For two points~$p,q\in\R^d$ we denote by~$\dist pq$ the Euclidean distance between~$p$ and~$q$. For a point~$c\in\R^d$ and a positive real number~$r$, the \emph{ball} $\mathrm{B}(c,r)$ of radius~$r$ around~$c$ is the set~$\{p\in\R^d\colon\dist pc\le r\}$ of all points that have Euclidean distance at most~$r$ to~$c$. For a set~$A\subset\R^d$ we denote by~$\partial A$ the \emph{boundary} of~$A$. The boundary of~$\mathrm{B}(c,r)$ is formed by the \emph{sphere} $\mathrm{S}(c,r)=\{p\in\R^d\colon\dist pc=r\}$. %, which consists of all points that have Euclidean distance exactly~$r$ to~$c$. 
A \emph{unit} ball or sphere has a radius of~$r=1$. 
%For a point~$p\in\R^2$ we denote by~$D_p$ the unit disk centered at~$p$ and by~$C_p$ the unit circle centered at~$p$.
%% A \emph{unit simplex} in~$\R^d$ is formed by~$d+1$ points at pairwise distance one. 
The \emph{geodesic distance}~$\geo pq\in[0,2\pi]$ between two points~$p,q\in\Sp_d$ is determined by the central angle of a shortest great circle arc between~$p$ and~$q$.

Given a finite set~$X$ of geometric objects, such as hyperplanes or spheres, in~$\R^d$, the \emph{arrangement}~$\mathcal{A}(X)$ of~$X$ is the partition of~$\R^d$ induced by~$X$ into so-called cells of various dimension. The maximal connected open subsets of~$\R^d\setminus\bigcup_{R\in X}R$ are the \emph{$d$-cells} of~$\mathcal{A}(X)$. And every relatively open~$k$-dimensional intersection of two or more~$d$-cells forms a~$k$-cell of~$\mathcal{A}(X)$. The~$d$-cells are also called \emph{full-dimensional cells} or even just \emph{cells} of~$\mathcal{A}(X)$. The~$0$-cells and $1$-cells are also called \emph{vertices} and \emph{edges} of~$\mathcal{A}(X)$, respectively. 

For geometric realizations of dichotomous ordinal graphs in~$\R^d$ we may fix a global scale and assume without loss of generality %that we have 
a threshold of~$\delta=1$. %\todo{FM: maybe we should recall what $\delta$ is?}.
Furthermore, as we consider finite graphs only, we may assume that no two vertices have distance exactly one. % in a geometric realization. 

\begin{observation}\label{obs:delta}
  If a dichotomous ordinal graph admits a geometric realization in~$\R^d$, then it admits a realization in~$\R^d$ where no two vertices have distance exactly one.
\end{observation}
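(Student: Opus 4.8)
The plan is to first rescale the given realization so that no \emph{edge} has length exactly one, and then apply a generic perturbation to push all remaining vertex pairs off distance one. Throughout we use, as established just above, that we may assume the threshold is $\delta=1$, so a geometric realization in~$\R^d$ is a point set $p_1,\dots,p_n\in\R^d$ in which every short edge has length at most~$1$ and every long edge has length strictly greater than~$1$.

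First I would rescale. Let $d_{\min}$ be the minimum length of a long edge of~$G$ in the given realization, with the convention $d_{\min}=+\infty$ if there are no long edges; by the previous paragraph $d_{\min}>1$. Pick a factor $\lambda$ with $1/d_{\min}<\lambda<1$ (such $\lambda$ exists since $d_{\min}>1$, reading $1/d_{\min}$ as $0$ when there are no long edges) and replace each $p_i$ by $\lambda p_i$. In the rescaled drawing every short edge has length at most $\lambda<1$ and every long edge has length more than $\lambda\,d_{\min}>1$; hence it is still a geometric realization with threshold~$1$, and now no edge has length exactly one.

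Then I would perturb within the space of realizations. Consider the set $U\subseteq(\R^d)^n$ of configurations $(q_1,\dots,q_n)$ such that $\dist{q_i}{q_j}<1$ for every short edge $\{i,j\}$ and $\dist{q_i}{q_j}>1$ for every long edge $\{i,j\}$. Each of these finitely many constraints is an open condition, so $U$ is open, and it is nonempty because it contains the rescaled configuration from the previous step. For each pair $\{i,j\}$ of vertices, the set $H_{ij}=\{(q_1,\dots,q_n)\colon\dist{q_i}{q_j}=1\}$ is the zero set of a nonzero polynomial, hence closed with empty interior; a finite union of such nowhere-dense closed sets is again nowhere dense, so $U\setminus\bigcup_{\{i,j\}}H_{ij}$ is a nonempty (open) set. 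Any configuration in it is a geometric realization of~$G$ in~$\R^d$ with threshold~$1$ in which no two vertices are at distance exactly one.

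I do not expect a genuine obstacle here; the two points that need a little care are that the rescaling has to precede the perturbation (a naive perturbation of the original drawing could stretch a short edge of length exactly one past~$1$ and thereby destroy the realization), and that one should invoke nowhere-density of the loci $H_{ij}$ rather than merely measure zero, which is immediate since each $H_{ij}$ is an algebraic hypersurface in~$(\R^d)^n$.
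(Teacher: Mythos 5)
Your proof is correct. The paper itself gives no explicit argument for this observation; it is justified by the preceding remark that one may fix a global scale (so $\delta=1$) and that the graph is finite, i.e., the intended argument is pure genericity. Your two-step version (rescale to make all edge-length inequalities strict, then perturb the configuration off the finitely many algebraic hypersurfaces $H_{ij}$) is sound, and you are right that the rescaling must come first, since a short edge of length exactly one sits on the boundary of the open region $U$. The only remark worth making is that your first step already essentially finishes the job: the admissible scale factors form the nondegenerate interval $\bigl(1/d_{\min},\,1/s_{\max}\bigr]$, where $s_{\max}\le 1$ is the maximum short-edge length, while the scale factors producing some pair at distance exactly one form a finite set (one value $1/\dist{p_i}{p_j}$ per distinct pairwise distance), so a single generic choice of $\lambda$ in that interval yields the desired realization without any further perturbation in $(\R^d)^n$. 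Your configuration-space perturbation is a more general mechanism than is needed here, but it is not wrong, and it has the mild advantage of not depending on the one-parameter family of rescalings being rich enough.
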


In contrast, such a global rescaling does not work in general for geometric realizations on~$\Sp^d$. %; there we may need to choose~$\delta$ depending on the graph. 
%
%A dichotomous ordinal graph~$G=(V,E_S \cup E_\ell)$ has two natural induced subgraphs: The \emph{short} subgraph~$G_S=(V,E_S)$ and the \emph{long} subgraph~$G_\ell=(V,E_\ell)$. 
If the short subgraph of a dichotomous ordinal graph has several connected components, then we can draw these components mutually far apart.
%, i.e., so that all edges between them 
%the long edges between them have length greater than one. 
This yields the following.% observation.
%let $G^i_s = (V^i,E^i_s)$, $i=1,\dots,k$ be the connected components of the graph $G_s=(V,E_s)$ induced by the short edges. For $i=1,\dots,k$ let $E_\ell^i \subseteq E_\ell$ be the set of long edges with both end vertices in $V_i$. 
%Assume each dichotomous ordinal graph $G_i = (V_i,E^i_s \cup E^i_\ell)$ admits a geometric realization. Then we can draw them far apart. Now the long edges of $G$ between different connected components of $G_s$ will be drawn with length greater than one. This yields the following. %observation.
%\todo{MH: I guess this works for~$\R^d$ only, not on~$\Sp^d$.}

\begin{observation}\label{obs:connected}
  A dichotomous ordinal graph admits a geometric realization in~$\R^d$ if and only if each subgraph induced by a connected component of its short subgraph does.
\end{observation}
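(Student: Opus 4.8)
The plan is to establish the two implications separately; the forward direction is an instance of monotonicity, and the reverse direction is the substantive one, handled by placing the component drawings far apart. For the forward direction, suppose $G=(V,E_s\cup E_\ell)$ has a geometric realization with threshold~$\delta$, and let $W$ be the vertex set of a connected component of the short subgraph $(V,E_s)$. Restricting the drawing to~$W$ leaves all pairwise distances unchanged, so the same~$\delta$ witnesses that in the induced subgraph~$G[W]$ the short edges are exactly those of length at most~$\delta$; hence~$G[W]$ is realizable.

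For the reverse direction, let $W_1,\dots,W_k$ be the vertex sets of the connected components of the short subgraph. These parts partition~$V$, and crucially every short edge of~$G$ has both endpoints in the same part. For each~$i$ take a geometric realization~$\Gamma_i$ of~$G[W_i]$ in~$\R^d$; by \cref{obs:delta} we may assume its threshold is~$1$ and that no two vertices of~$\Gamma_i$ lie at distance exactly~$1$, so every short edge of~$G[W_i]$ has length~$<1$ and every long edge of~$G[W_i]$ has length~$>1$. Each~$\Gamma_i$ is a finite point set and hence lies in some ball; after translating the~$\Gamma_i$ we may assume these balls are pairwise at distance more than~$1$, so that any two vertices from distinct parts are at distance more than~$1$ in the combined drawing~$\Gamma$. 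Taking threshold~$1$ for~$\Gamma$: every short edge of~$G$ lies inside one part and has length~$<1$; every long edge of~$G$ either lies inside one part and has length~$>1$ by the choice of~$\Gamma_i$, or joins two parts and has length~$>1$ by the separation; and non-edges carry no constraint. Thus~$\Gamma$ is a geometric realization of~$G$.

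I expect no genuine obstacle here: the statement is a routine observation. The only two points that need the surrounding setup are that no short edge runs between two components (immediate from the definition of the short subgraph, and what makes the short edges automatically correct) and the appeal to \cref{obs:delta} to pass to strict inequalities, which is what allows the drawings to be glued together without creating a tie at the threshold. It is worth remarking that this argument is specific to Euclidean space: on~$\Sp^d$ the diameter is bounded, so one cannot move the component drawings arbitrarily far apart, which is precisely why no analogous statement is asserted for the sphere.
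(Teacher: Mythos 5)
Your proposal is correct and follows exactly the paper's (one-line) justification: the paper simply notes that the components of the short subgraph can be drawn mutually far apart, which is precisely your reverse direction, and your forward direction by restriction plus the appeal to \cref{obs:delta} to avoid ties at the threshold are routine elaborations of the same idea. Nothing to object to.
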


\section{Complete Bipartite Graphs}\label{sec:completeBipartite}
%\todo{Add missing bipartite graphs: $K_{5,5}, K_{5,6}, K_{6,6}$; see k55.pdf for $K_{5,5}$ that might not be possible to draw.}

%The goal of this section is
In this section, we prove
the following theorem. The proof is split into \cref{thm:complete,thm:k47,thm:k55}, in combination with the fact that being pandichotomous is a monotone graph property.

\begin{theorem}\label{thm:kmnr2}
  The complete bipartite graph~$K_{m,n}$ is pandichotomous in~$\R^2$ if and only if either (1)~$m\le 3$ or (2)~$m=4$ and~$n\le 6$.
\end{theorem}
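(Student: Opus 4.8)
The plan is to reformulate realizability in terms of arrangements of unit disks, dispatch the positive and negative cases, and glue them together with monotonicity. By \cref{obs:delta} we may fix the threshold to be $\delta=1$. A geometric realization in $\R^2$ of a dichotomous $K_{m,n}$ with parts $\{a_1,\dots,a_m\}$ and $\{b_1,\dots,b_n\}$ then amounts to a placement of $a_1,\dots,a_m$ together with, for each $b$, a point lying in the cell of the arrangement of the unit disks $\mathrm{B}(a_1,1),\dots,\mathrm{B}(a_m,1)$ whose \emph{pattern}---the index set of the disks containing it---equals $\{i : a_ib\in E_s\}$; vertices $b$ with a common pattern are sent to distinct nearby points of the same cell. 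Hence $K_{m,n}$ is pandichotomous in $\R^2$ if and only if for every family $\mathcal{F}$ of at most $n$ subsets of $[m]$ (repetitions allowed) some placement of $a_1,\dots,a_m$ has, for each $S\in\mathcal F$, a cell of pattern $S$ in its unit-disk arrangement. The task thus splits by $m$.

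For $m\le 3$ (contained in \cref{thm:complete}) there is nothing delicate: three unit disks can be placed to form a Venn diagram, realizing all $8$ subsets of $[3]$ at once, so $K_{3,n}$ is pandichotomous in $\R^2$ for every $n$. For $m=4$ and $n\le 6$ (the remainder of \cref{thm:complete}) the situation is tighter, because no placement of four unit disks realizes all $16$ subsets of $[4]$ (four circles cannot form a Venn diagram: by Euler's formula their arrangement has at most $14$ cells), so one has to show that every family of at most six patterns is realizable. The key device is the \emph{circular placement}: put $a_1,\dots,a_4$ at pairwise distinct angular positions on a tiny circle about the origin. An elementary computation shows that the patterns realized by this arrangement are exactly the sets $\{i : \theta_i\in I\}$ over all open arcs $I$ of length below $2\pi$; consequently it realizes every subset of $[4]$ except the two ``diagonal'' pairs $\{i,k\},\{j,\ell\}$, where $i,j,k,\ell$ is the cyclic order of the angular positions. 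Choosing this cyclic order lets the excluded pair of pairs be any of the three perfect matchings of $[4]$, so the circular placement already settles every family $\mathcal F$ whose collection of $2$-element members, seen as a graph $H$ on $[4]$, avoids some perfect matching---equivalently, $H$ is triangle-free of maximum degree at most $2$. What is left are the families whose $H$ contains a triangle or a vertex of degree three; as $|\mathcal F|\le 6$, up to relabeling there are only finitely many of these, and each is handled by a tailored placement (for instance a Venn diagram on three of the disks and the fourth disk added to pick up the remaining patterns). This finishes the positive direction.

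For the negative direction we give an explicit short/long partition of $K_{4,7}$, with planar short subgraph (\cref{thm:k47}), and one of $K_{5,5}$ (\cref{thm:k55}), neither of which has a geometric realization in $\R^2$. In each case we suppose toward a contradiction that a realization with threshold $1$ exists and examine the unit-disk arrangement it forces: the prescribed patterns make prescribed disks intersect and prescribed cells nonempty, and the heart of the argument is to show that the resulting local pattern of overlaps is impossible for unit disks in the plane---ruled out by a Helly-type argument in the plane together with the bound on the number of cells of a disk arrangement and a case distinction on the relative positions of the centers.

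Finally we assemble the statement via monotonicity. Non-realizability of the $K_{5,5}$ partition shows $K_{m,n}$ is not pandichotomous for $\min(m,n)\ge 5$, and non-realizability of the $K_{4,7}$ partition shows it for $\min(m,n)\ge 4$ and $\max(m,n)\ge 7$; the two regions together are precisely the complement of ``$m\le 3$ or ($m=4$ and $n\le 6$)''. In the other direction, pandichotomy of $K_{3,n}$ and of $K_{4,6}$ gives, again by monotonicity, pandichotomy of $K_{m,n}$ throughout that region. The main obstacle is the negative direction: pinning down partitions of $K_{4,7}$ and $K_{5,5}$ whose non-realizability is stable under all placements of the centers; on the positive side, the analogous difficulty is checking that every one of the finitely many families not covered by the circular placement is indeed realizable.
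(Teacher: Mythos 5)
Your overall architecture is the same as the paper's: translate realizability into the existence of cells with prescribed patterns in an arrangement of unit disks, prove the positive cases ($m\le 3$ via a three-disk Venn diagram, $m=4$, $n\le 6$ via special placements), exhibit non-realizable instances of $K_{4,7}$ and $K_{5,5}$, and glue everything with monotonicity; your ``circular placement'' realizing all subsets of $[4]$ except the two diagonal pairs is exactly the arrangement of \cref{SUBFIG:K4m_triples}, and your characterization of when it suffices (the pair-graph $H$ misses some perfect matching of $K_4$) is correct and in fact slightly sharper than the paper's ``at most two pairs'' case.

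The gap is in the leftover positive cases, and it sits precisely where the hypothesis $n\le 6$ must be used. When $H$ contains a triangle or a degree-$3$ vertex you defer to ``finitely many families, each handled by a tailored placement,'' but finiteness proves nothing here: the family of \cref{thm:k47} (all four triples plus a star of three pairs) is one of these finitely many configurations for $n=7$ and is \emph{not} realizable, so a plan that does not quantitatively separate $n\le 6$ from $n\ge 7$ cannot close this case. The paper's argument is the missing piece: if $H$ has at least three edges then at least three members of $\mathcal{F}$ are pairs, so at most three are singletons or triples; since there are four complementary couples $\bigl(\{u\},U\setminus\{u\}\bigr)$, some $u$ has neither its singleton nor its cosingleton in $\mathcal{F}$, and the arrangement of \cref{SUBFIG:K4m_pairs} (one disk nested centrally among three in Venn position), which realizes all $14$ subsets other than $\{u\}$ and $U\setminus\{u\}$, finishes the case. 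You should replace the ``tailored placements'' step by this counting argument. Separately, your negative direction is only a statement of intent --- the actual contradictions in \cref{thm:k47} (Helly plus the chord lemma, \cref{lem:chord}) and \cref{thm:k55} (the circular-order argument around $w_5$) carry most of the difficulty and are not supplied --- but as a plan it names the right examples and tools.
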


A convenient way to reason about geometric realizations for bipartite graphs is in terms of arrangements of spheres. Consider a bipartite dichotomous ordinal graph~$G=(U \cup W,E)$ and suppose that the vertices of~$U$ are already drawn as points in~$\R^d$. Then, to obtain a geometric realization for~$G$ the task is to place each~$w \in W$ such that for each~$u\in U$ with~$uw\in E$ we have~$w\in\mathrm{B}(u,1)$ if and only if the edge~$uw$ is short; see~\cref{SUBFIG:K3m}. % for an illustration. 

Let~$U=\{u_1,\ldots,u_n\}$, let~$D_i=\mathrm{B}(u_i,1)$,  let~$C_i=\partial D_i$, and let~$\mathcal{C}$ denote the arrangement of~$C_1,\ldots,C_n$.
To every vertex~$w \in W$ we associate a set~$V(w)\subseteq U$ such that~$u\in V(w)$ if and only if~$uw$ is a short edge in~$G$. We refer to~$V(w)$ as a \emph{singleton}, a \emph{pair}, or a \emph{triple} if~$|V(w)|=1$, $2$, or~$3$,
%contains one, two, or three vertices, 
respectively. A subset~$X\subseteq U$ is \emph{realized} by a drawing of~$U$ if there is a cell~$r$ in~$\mathcal{C}$ such that~$r\subseteq D_i$ if and only if~$u_i\in X$. Then~$G$ admits a geometric realization if and only if there exists a drawing %/placement 
of~$U$ where~$V(w)$ is realized, for all~$w\in W$.

%\medskip\todo[inline]{MH: I think we could strengthen the result below to hold for bipartite graphs in~$\R^d$ if the vertices of one color have degree at most~$d+1$.}

\begin{lemma}\label{thm:complete}
  The complete bipartite graph~$K_{3,m}$ is pandichotomous in~$\R^2$, for all~$m\in\N$. 
  The complete bipartite graph~$K_{4,m}$ is pandichotomous in~$\R^2$, for all~$1\le m\le 6$. 
  %% A dichotomous ordinal graph $G$ admits a geometric realization if $G$ is the complete bipartite graph $K_{3,m}$ for an arbitrary $m$ or the complete bipartite graph $K_{4,m}$ for an $m \leq 6$.
\end{lemma}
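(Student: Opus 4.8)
The plan is to work entirely in the arrangement-of-circles reformulation set up above. A geometric realization of $K_{m,n}$ in $\R^2$ for a prescribed short/long partition is the same as a placement of $u_1,\dots,u_m$ in $\R^2$ for which every set of the family $\mathcal F:=\{V(w):w\in W\}\subseteq 2^{U}$ is realized by a cell of the arrangement $\mathcal C$ of the unit circles $C_1,\dots,C_m$; several vertices of $W$ with the same short‑neighborhood may share one cell. Since $|\mathcal F|\le|W|=n$, it suffices to show that every family $\mathcal F\subseteq 2^{U}$ with $|\mathcal F|\le n$ is realized by some placement of $m$ unit disks, and — because realizability is inherited by subfamilies — only the case $|\mathcal F|=n$ must be treated. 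For $m=3$ a single placement does everything: put $u_1,u_2,u_3$ at the corners of an equilateral triangle of sufficiently small side length, so that the three unit disks form a Venn diagram. A short computation then shows that all $2^3=8$ subsets of $U$ are realized — the triple intersection near the centroid, the three lens‑minus‑third‑disk regions, the three disk‑minus‑the‑other‑two regions, and the unbounded cell for $\emptyset$ — so $K_{3,m}$ is pandichotomous for every $m$.

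For $K_{4,m}$ with $m\le 6$, I would use exactly two placements of four unit disks, up to relabeling of $U$. The \emph{square} placement puts $u_1,\dots,u_4$ at the corners of a small square; an elementary case check over the $16$ subsets shows that it realizes all of them except the two two‑element sets forming one of the three partitions of $U$ into pairs (the two ``diagonal'' pairs, which are forced to satisfy $D_i\cap D_j\subseteq D_k\cup D_l$). Hence, after relabeling, the square realizes every family $\mathcal F$ for which some partition of $U$ into two pairs has neither of its parts in $\mathcal F$. The \emph{triangle‑plus‑centroid} placement puts three of the $u_i$ at the corners of a small equilateral triangle and the fourth at its centroid; a similar check shows that it realizes all $16$ subsets except $\{v\}$ and $U\setminus\{v\}$, where $v$ is the centroid vertex. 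Hence, after relabeling, it realizes every family $\mathcal F$ for which some $v\in U$ has $\{v\}\notin\mathcal F$ and $U\setminus\{v\}\notin\mathcal F$. It remains to observe that for $|\mathcal F|\le 6$ at least one of these two conditions holds: if neither did, then for each of the three two‑pairings of $U$ some pair of it lies in $\mathcal F$, and since the six pairs are split evenly among the three two‑pairings this contributes three distinct pairs to $\mathcal F$; and for each of the four vertices $v$ some one of the two sets $\{v\},U\setminus\{v\}$ lies in $\mathcal F$, contributing four further, mutually distinct, singletons‑or‑triples. These seven sets are pairwise distinct, so $|\mathcal F|\ge 7$, a contradiction. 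This proves $K_{4,6}$, hence $K_{4,m}$ for $m\le 6$, pandichotomous.

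The main obstacle lies in the two elementary case checks: one must determine exactly which of the $16$ subsets each template realizes, i.e.\ for each subset $X$ decide whether the region $\bigcap_{i\in X}D_i\setminus\bigcup_{i\notin X}D_i$ has nonempty interior or is empty, via distance inequalities that only need to hold for a sufficiently small square/triangle. The nontrivial insight is to find two templates whose ``blind spots'' are complementary in the precise sense above — a two‑pairing versus a singleton/triple pair — and whose union therefore cannot be avoided by a family of only six sets; note that this counting is exactly tight, consistently with $K_{4,7}$ not being pandichotomous (\cref{thm:k47}), which is what fixes the threshold at $n=6$. A minor point to keep in mind throughout is that ``cell'' means a full‑dimensional cell, so for ``$X$ is realized'' it is enough that the corresponding region be nonempty and open, not that it be connected.
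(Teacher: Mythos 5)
Your proposal is correct and follows essentially the same route as the paper: a Venn diagram of three unit circles for $K_{3,m}$, and for $K_{4,m}$ the same two four-disk templates (one with a ``central'' disk missing exactly $\{v\}$ and $U\setminus\{v\}$, one in circular/square position missing exactly the two opposite pairs), combined with the observation that a family of at most six subsets cannot defeat both templates. The only difference is cosmetic: you phrase the final dichotomy as a direct counting argument (both templates failing would force at least $3+4=7$ distinct sets), whereas the paper splits on whether $V(W)$ contains at least three or at most two pairs --- the two arguments are equivalent, and yours is arguably slightly cleaner.
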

\begin{proof}
  For~$K_{3,m}$ we can draw~$U=\{u_1,u_2,u_3\}$ so that all eight subsets of~$U$ are realized; see~\cref{SUBFIG:K3m}. Therefore, any dichotomous ordinal $K_{3,m}$ admits a geometric realization.
  For~$|U|\ge 4$ such a universal placement is not possible because an arrangement of~$n$ circles has at most~$n(n-1)+2$ cells~\cite{s-gter-26}. 
  So an arrangement of four circles has at most~$14$ cells, whereas a four-element set has~$16$ subsets.
  However, for~$|U|=4$ and~$|W|\le 6$ we can always obtain a geometric realization as follows. Let~$V(W)=\{V(w)\colon w\in W\}\subset 2^U$.
  %Let~$V(W)\subset 2^U$ denote the set of subsets of~$U$ that are associated to some vertex of~$W$. 
  
  If there are at least three pairs in~$V(W)$, then, given that~$|V(W)|\le|W|\le 6$, the number of triples plus the number of singletons in~$V(W)$ together is at most three. Thus, as~$|U|=4$, there exists a vertex~$u\in U$ such that~$\{u\}\notin V(W)$ and~$U\setminus\{u\}\notin V(W)$. So we can use the drawing depicted in~\cref{SUBFIG:K4m_pairs}, where we assign~$u$ to the central disk. As all subsets of~$U$ other than~$\{u\}$ and~$U\setminus\{u\}$ are realized, this is a valid geometric realization of~$G$.
  
  Otherwise, there are at most two pairs in~$V(W)$. We use the drawing depicted in~\cref{SUBFIG:K4m_triples}, where we assign the vertices of~$U$ to the disks so that both pairs in~$V(W)$ appear consecutively in the circular order of disks. (This works regardless of whether or not these pairs share a vertex.) As all subsets of~$U$ other than the two pairs that correspond to opposite circles in the drawing are realized, this is a valid geometric realization of~$G$.
\end{proof}

\begin{figure}[htbp]
    \centering
    \begin{minipage}[t]{.24\textwidth}
    \centering
    \includegraphics[scale=0.8, page=1]{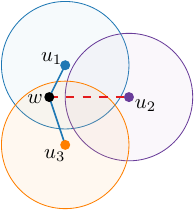}
    \subcaption{$K_{3,m}$\label{SUBFIG:K3m}}
    \end{minipage}
    \hfill
    \begin{minipage}[t]{.24\textwidth}
    \centering
    \includegraphics[scale=0.8, page=3]{complete_bipartite_disks}
    \subcaption{$K_{4,m}$ all pairs\label{SUBFIG:K4m_pairs}}
    \end{minipage}
    \hfill
    \begin{minipage}[t]{.24\textwidth}
    \centering
    \includegraphics[scale=0.8, page=2]{complete_bipartite_disks}
    \subcaption{$K_{4,m}$ all triples\label{SUBFIG:K4m_triples}}
    \end{minipage}
    \hfill
    \begin{minipage}[t]{.24\textwidth}
    \centering
    \includegraphics[scale=.8]{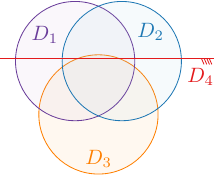}
    \subcaption{arbitrary radii\label{SUBFIG:realize}}
    \end{minipage}
    \caption{Arrangements of circles to represent one color class of a bipartite graph in~$\R^2$.} %Regions for the other side.}
    \label{fig:completebipartite}
\end{figure}

%\todo{The following should be known, if anyone has a reference\dots}
%\todo{And it should also hold more generally for any centrally symmetric convex set rather than Euclidean ball.}
%\todo[inline]{FM: add a short introductory sentence for next lemma?}
The following elementary lemma about unit disks turns out helpful.
\begin{lemma}
\label{lem:chord}
   Let~$D,E$ be unit disks in~$\R^2$, let~$c$ be a chord of~$D$, and let~$A$ be a closed part of~$D$ bounded by~$c$ and~$\partial D$ whose interior~$A^\circ$ does not contain the center of~$D$. Then~$c\subset E\iff A\subset E$.
\end{lemma}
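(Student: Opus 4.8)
Note first that the direction ``$A\subseteq E\Rightarrow c\subseteq E$'' is immediate, since $c\subseteq A$; the content is the converse. The plan is to reduce to a statement about a single circular arc and then settle that by a two-circle computation. Let $\alpha$ be the arc of $\partial D$ that, together with $c$, bounds $A$, so $\partial A=c\cup\alpha$. Since $A$ is convex (it is the intersection of $D$ with a closed halfplane) and $c\cup\alpha=\partial A\subseteq A$, we have $A=\mathrm{conv}(c\cup\alpha)$; as $E$ is convex, this gives $A\subseteq E\iff c\cup\alpha\subseteq E$. So, assuming $c\subseteq E$, it remains only to prove $\alpha\subseteq E$. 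The hypothesis that $A^\circ$ does not contain the center $o$ of $D$ says exactly that $\alpha$ is \emph{short}, i.e.\ it has central angle at most~$\pi$ (indeed $o$ always lies in $D^\circ$, so $o\notin A^\circ$ forces $o$ out of the open halfplane on the $\alpha$-side of the line through $c$).

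The geometric crux is the following fact: for every unit disk $F\ne D$, the set $\partial D\cap F$ is the closed minor arc of $\partial D$ delimited by the (at most two) points of $\partial D\cap\partial F$, and hence is short. I would prove this by the standard computation: if $\partial D$ and $\partial F$ meet in two points $a,b$, with centers $o,f$, then the midpoint $m$ of $ab$ satisfies $\|m-o\|=\|m-f\|=\sqrt{1-\|a-b\|^2/4}$ and lies on the line $of$, so $m$ is the midpoint of $of$; thus line $ab$ separates $o$ from $f$, the arc of $\partial D$ on the $o$-free side of $ab$ is the minor one, and expanding $\|x-f\|^2=1-2\langle x,f\rangle+\|f\|^2$ (placing $o$ at the origin and using that this arc lies on the $f$-side) shows that arc is precisely $\{x\in\partial D:\|x-f\|\le1\}=\partial D\cap F$. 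The cases $|\partial D\cap\partial F|\le1$ are vacuous or will be excluded in the application.

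To finish, apply this with $F=E$. We may assume $D\ne E$ (else $\alpha\subseteq D=E$). Since $c\subseteq D\cap E$ has positive length, $D$ and $E$ are neither disjoint nor tangent, so $\partial D$ and $\partial E$ meet in exactly two points and, by the crux, $\beta:=\partial D\cap E$ is a short arc. The endpoints $p,q$ of $c$ lie on $\partial D$ and inside $E$, so $p,q\in\beta$; hence the sub-arc of $\beta$ joining $p$ to $q$ is one of the two $p$--$q$ arcs of $\partial D$. If it is $\alpha$, then $\alpha\subseteq\beta\subseteq E$ and we are done. If it is the complementary arc $\alpha'$, then $\alpha'\subseteq\beta$ forces the central angle of $\alpha'$ to be at most~$\pi$; since that of $\alpha$ is at most~$\pi$ too and the two sum to $2\pi$, both are semicircles, so $c$ is a diameter of $D$. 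Then $c$ has length~$2$, hence is a diameter of the unit disk $E$ as well, so $D$ and $E$ share the center, the midpoint of $c$ — contradicting $D\ne E$. Therefore $\alpha\subseteq E$, and $A=\mathrm{conv}(c\cup\alpha)\subseteq E$.

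I expect the only genuinely delicate points to be the ``short arc'' claim — in particular getting the sign right so that $\partial D\cap F$ is the minor arc rather than the major one — and the careful disposal of the degenerate case in which $c$ is a diameter of $D$ (which, as above, forces $D=E$); the rest is routine bookkeeping with central angles.
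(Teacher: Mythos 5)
Your proof is correct, but it takes a genuinely different route from the paper's. The paper argues by a shrinking construction: starting from $E$, it produces a disk $E''\subseteq E$ of radius at most one whose boundary passes through both endpoints $p,q$ of the chord, observes that both centers lie on the perpendicular bisector of $pq$ with the unit-radius center at least as far from the chord, and concludes that the cap $A$ of $D$ is nested inside the corresponding cap of $E''$, hence inside $E$. You instead reduce the claim to the bounding arc $\alpha$ via $A=\mathrm{conv}(c\cup\alpha)$ and convexity of $E$, prove as your key fact that a unit circle meets any other unit disk in a \emph{minor} arc (via the explicit computation placing the midpoint of the two intersection points at the midpoint of the segment of centers), and then rule out the possibility that $E$ captures the wrong $p$--$q$ arc by an angle-sum argument that collapses to the diametrical case, which you dispose of separately. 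The two arguments hinge on the same underlying fact -- that among circles through $p$ and $q$, a larger radius pushes the center farther from the chord, equivalently that equal-radius circles cut minor arcs out of each other -- but your version is more self-contained and computational: it avoids the two-stage shrinking deformation and makes explicit the side/orientation bookkeeping (which arc is the minor one, and why the complementary arc cannot lie in $E$) that the paper handles implicitly through the nesting of caps. The only points worth tightening are cosmetic: you should note once that the degenerate chord (a single point) makes the statement trivial, and you could observe that your separation of the two centers by the radical line is strict, which lets the final case distinction close without even invoking the diameter argument.
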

\begin{proof}\label{lem:chord:proof}
   The~$\Leftarrow$ implication is trivial because~$c\subset A$. It remains to prove the~$\Rightarrow$ implication. So suppose that~$c=pq\subset E$. If~$E=D$, then the statement is trivial. Hence suppose that~$E\ne D$. In particular, it follows that~$c$ is not diametrical (if it is, then it uniquely determines~$E=D$, and the statement holds for both parts bounded by~$c$ and~$\partial D$). 
   
   We obtain the disk~$E'\subseteq E$ by shrinking~$E$ concentrically until at least one of~$p,q$ lies in~$\partial E'$. Without loss of generality we have~$p\in\partial E'$. Next we obtain the disk~$E''\subseteq E'$ by shrinking~$E'$ such that~$E''$ is tangent to~$E'$ at~$p$ by moving    
   %~$p$ remains on the boundary and 
   the center of the disk 
   %moves 
   along the ray towards~$p$ until~$q\in\partial E''$. 
   See \cref{SUBFIG:shrinkingE}.
   Note that both~$D$ and~$E''$ are disks whose bounding circle passes through~$p$ and~$q$. Therefore, both the center~$\mathrm{c}(D)$ of~$D$ and the center~$\mathrm{c}(E'')$ of~$E''$ lie on the orthogonal bisector of~$p$ and~$q$. Furthermore, the disk~$D$ has unit radius, whereas the radius of~$E''$ is at most one. Therefore, the distance between~$\mathrm{c}(D)$ and~$pq$ is at least as large as the distance between~$\mathrm{c}(E'')$ and~$pq$. Let~$A''$ denote the cap of~$E''$ induced by~$pq$, that is, the closed part of~$E''$ on the side of~$pq$ that does not contain~$\mathrm{c}(E'')$. As the bounding circles of~$D$ and~$E''$ intersect in exactly the two points~$p$ and~$q$, we have either~$A\subseteq A''$ or~$A''\subseteq A$. As~$\mathrm{c}(E'')$ is at least as close to~$pq$ as~$\mathrm{c}(D)$, we are in the former case and have~$A\subseteq A''$. Noting that~$A''\subset E''\subseteq E$ completes the proof.
   %    
   %Within the family~$\mathcal{D}(p,q)$ of disks that pass through both~$p$ and~$q$ exactly those disks do not contain~$A$ that have radius greater than one and whose center lies on the other side of the line through~$c$ as~$A$. (Suppose we have a disk~$F\in\mathcal{D}(p,q)$ and a point in~$x\in A\setminus F$. Then there also exists a point~$x'\in A\cap\partial F$. Hence, the circle~$\partial F$ is fully determined by the three points~$p,q,x'$, and it has radius at least one and its center lies on the other side of the line through~$c$ as~$A$.\todo{Why?} As~$D$ is the only unit disk with these properties and~$x$ is a witness that~$F\ne D$, it follows that the radius of~$F$ is strictly greater than one.) Therefore, we have~$A\subset E''\subseteq E$.
\end{proof}

\begin{figure}[htbp]
    \centering
     % \captionsetup[subfigure]{justification=centering}
    \begin{minipage}[t]{.48\textwidth}
    \centering
    \includegraphics[page=2]{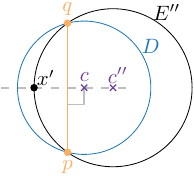}
    \subcaption{Shrinking $E$\label{SUBFIG:shrinkingE}}
    \end{minipage}
    \hfill
    \begin{minipage}[t]{.48\textwidth}
    \centering
    \includegraphics[page=1]{proof2}
    \subcaption{$E''$ is too large\label{SUBFIG:tooLarge}}
    \end{minipage}
    \caption{Illustration of the proof of \cref{lem:chord}.}
    \label{fig:chord}
\end{figure}
%\begin{corollary}\label{lem:diskintersect}
%   Let~$C,D,E$ be unit disks in~$\R^2$. Then~$E\supseteq C\cap D\iff E\supseteq\partial C\cap\partial D$.    
%\end{corollary}
%\begin{proof}
%   The~$\Rightarrow$ implication is trivial, as is the statement in the cases (1)~$C\cap D=\emptyset$, (2)~$C=D$, or (3)~if~$C$ and~$D$ intersect in a single point. It remains to prove the~$\Leftarrow$ implication if~$\partial C\cap\partial D=\{p,q\}$, for two points~$p\ne q$. The statement follows by applying \cref{lem:chord} to both~$C,E$ and~$D,E$, with~$c=pq$. 
%\end{proof}

\begin{lemma}
\label{thm:k47}
   There exists a dichotomous ordinal~$K_{4,7}$ that is not realizable in~$\R^2$. %does not admit a geometric realization.
\end{lemma}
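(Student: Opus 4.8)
The plan is to exhibit an explicit edge bipartition of $K_{4,7}$ and show it has no geometric realization in $\R^2$ via a counting argument on circle arrangements, refined by \cref{lem:chord}. Fix the side $U=\{u_1,u_2,u_3,u_4\}$ of size four and the side $W$ of size seven. As in the discussion preceding \cref{thm:complete}, any realization fixes four unit disks $D_i=\mathrm B(u_i,1)$, and each $w\in W$ must land in a cell of the arrangement $\mathcal C$ of the boundary circles $C_1,\dots,C_4$ that realizes the prescribed subset $V(w)\subseteq U$. So it suffices to choose seven subsets $S_1,\dots,S_7\subseteq U$ (these will be the $V(w)$'s, defining the bipartition) such that no placement of four unit disks simultaneously realizes all seven. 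Since an arrangement of four circles has at most $4\cdot 3+2=14$ cells, we have some slack, so the hard constraints must come from geometry, not pure counting — this is where \cref{lem:chord} enters.

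The natural candidate is to take $V(W)$ to consist of subsets forcing many ``antipodal'' pair obstructions: e.g. take the singletons together with the three pairs that correspond to ``opposite'' circles in any cyclic arrangement, or better, a family of pairs no two of which can be simultaneously realized. The key structural fact I would isolate is a \emph{forbidden configuration lemma}: if a cell realizes the pair $\{u_i,u_j\}$ (i.e.\ lies in $D_i\cap D_j$ but outside $D_k,D_\ell$), then, applying \cref{lem:chord} to the chord of $D_i$ cut off by $C_j$ (or vice versa), one of the two disks must contain a whole cap of the other, which in turn forces containment relations among the centers' positions; iterating this over several prescribed pairs yields a cyclic system of strict inequalities on the pairwise distances $\dist{u_i}{u_j}$ that is infeasible. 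Concretely, I expect to show that realizing a certain set of three or four pairs forces, via \cref{lem:chord}, a chain $\dist{u_a}{u_b}<\dist{u_c}{u_d}<\dots<\dist{u_a}{u_b}$, a contradiction.

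The main obstacle will be organizing the case analysis: an arrangement of four unit circles in $\R^2$ has many combinatorial types (which pairs of circles cross, nesting patterns, which triples have a common point), and one must verify for each type that the chosen seven subsets cannot all appear as cells. I would handle this by first using \cref{lem:chord} to eliminate the ``easy'' types — whenever two of the prescribed pairs $\{u_i,u_j\}$ and $\{u_k,u_\ell\}$ with $\{i,j\}\cap\{k,\ell\}\ne\varnothing$ are both realized, the lemma forces a cap-containment that kills one of them — and then reduce to a bounded number of remaining configurations where all prescribed pairs are pairwise disjoint in index, which for $|U|=4$ means at most two such pairs, contradicting that we prescribed (at least) three. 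The delicate point is choosing the seven subsets so that \emph{every} combinatorial type is covered while keeping the argument short; I would aim for a bipartition where $V(W)$ contains all four singletons and three pairs (forming, say, a triangle $\{u_1u_2,u_2u_3,u_1u_3\}$ plus one more), and show the triangle of pairs alone is unrealizable together with the singletons $\{u_1\},\{u_2\},\{u_3\},\{u_4\}$ in any four-circle arrangement.
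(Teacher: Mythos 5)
Your plan has a fatal flaw at its core: the concrete family of subsets you propose to prescribe is in fact realizable. If $V(W)$ consists of the four singletons $\{u_1\},\{u_2\},\{u_3\},\{u_4\}$ together with the three pairs $\{u_1,u_2\},\{u_2,u_3\},\{u_1,u_3\}$, then placing $u_1,u_2,u_3$ in the standard three-circle Venn position of \cref{SUBFIG:K3m} (which realizes all eight subsets of $\{u_1,u_2,u_3\}$) and placing $u_4$ far from the other three realizes all seven prescribed sets simultaneously. The same example refutes your auxiliary claim that two prescribed pairs sharing an index cannot both be realized: the three pairs of a three-circle Venn diagram pairwise share indices and coexist happily. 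The deeper issue is that singletons and pairs are ``cheap'' cells, obtainable by spreading the centers out; the only way to make a dichotomous $K_{4,n}$ hard is to also demand triples (and hence, by Helly, the full intersection), which forces the four disks into a tight, essentially unique combinatorial position. Your proposal never forces that. In addition, the ``forbidden configuration lemma'' you invoke --- that \cref{lem:chord} yields a cyclic chain of strict inequalities among the distances $\dist{u_i}{u_j}$ --- is asserted but never derived, and it is unclear how it could be: \cref{lem:chord} converts containment of a chord into containment of a cap; it says nothing about comparing distances between centers.

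For contrast, the paper prescribes the four triples and the three pairs containing $u_4$. Helly's theorem then forces $D_1\cap D_2\cap D_3\cap D_4\neq\emptyset$ and forces the sub-arrangement of $C_1,C_2,C_3$ to look like \cref{SUBFIG:K3m}, with $C_4$ crossing but not containing the cell $f=D_1\cap D_2\cap D_3$. \cref{lem:chord} is used exactly once, to show that $D_4$ cannot contain all three vertices of $\partial f$ (otherwise it would contain $f$); a two-line case analysis on whether $D_4$ contains zero, one, or two of those vertices then contradicts one of the remaining prescribed triples or pairs. If you want to salvage your approach, you must replace your seven subsets by a family that includes enough triples to trigger the Helly rigidity; with only singletons and pairs the counting and crossing constraints are simply not binding.
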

\begin{proof}\label{thm:k47:proof}
   First we specify a dichotomous ordinal~$K_{4,7}$. Let~$U=\{u_1,u_2,u_3,u_4\}$ and~$W=\{w_1,\ldots,w_7\}$ denote the partition of the vertex set. For each~$w_i$, with~$1\le i\le 7$, we can specify its associated set~$U_i\subseteq U$ (such that exactly the edges between~$w_i$ and~$U_i$ are short). We choose all four subsets of size three and the three subsets of size two that contain~$u_4$, and distribute them among the vertices of~$W$ arbitrarily.

   Now consider an arbitrary but fixed geometric realization of this dichotomous ordinal graph, and the induced arrangement~$\mathcal{C}$. %In particular, this determines a point~$p_i$ for each vertex~$u_i$, with~$1\le i\le 4$. Denote by~$C_i$ the unit circle centered at~$p_i$, for~$1\le i\le 4$, and let~$D_i$ denote the corresponding unit disk. Let~$\mathcal{C}$ denote the arrangement of~$C_1,C_2,C_3,C_4$. Every subset~$U_i$, for~$1\le i\le 7$, corresponds to a cell in~$\mathcal{C}$ that is covered by exactly the disks whose centers are in~$U_i$, as this face defines the set of valid placements for~$w_i$. 
   In addition to the seven cells that correspond to the subsets that we specified explicitly, two more cells are required implicitly: As the disks are bounded, we always have an outer cell, which corresponds to the empty set, and there is also another cell that corresponds to the full set~$U$. The latter cell is required by Helly's Theorem~\cite{helly:23} because disks are convex and we specified all triples to be realized in~$\mathcal{C}$. 

   Consider the sub-arrangement~$\mathcal{C}^-$ that is induced by~$C_1,C_2,C_3$ in~$\mathcal{C}$. We have~$D_1\cap D_2\cap D_3\ne\emptyset$ because we require all triple intersections. Moreover, none of the three pairwise intersections is contained in the third disk because we require the triple intersection with~$D_4$. Therefore, combinatorially~$\mathcal{C}^-$ looks as depicted in \cref{SUBFIG:K3m}. By Helly the circle~$C_4$ crosses the cell~$f=D_1\cap D_2\cap D_3$ of~$\mathcal{C}^-$ but it does not fully contain it because we require the triple intersection~$D_1\cap D_2\cap D_3$ to form a cell in~$\mathcal{C}$. We claim that at least one vertex of~$\partial f$ lies outside of~$D_4$. 

   To prove the claim, observe that~$C_4$ is a unit circle and~$\partial f$ is formed by arcs of three unit circles. Suppose for a contradiction that~$D_4$ contains all three vertices~$\varphi_1,\varphi_2,\varphi_3$ of~$\partial f$. Then, as~$D_4$ is convex, the whole triangle~$\varphi_1\varphi_2\varphi_3$ is contained in~$D_4$. Applying \cref{lem:chord} to~$D_4$ and each of the edges of~$\varphi_1\varphi_2\varphi_3$ it follows that~$D_4\supset f$, a contradiction to our assumption that the triple intersection~$D_1\cap D_2\cap D_3$ forms a cell in~$\mathcal{C}$. This proves our claim.
   
   So without loss of generality we may suppose that the vertex of~$\partial f$ in~$C_1\cap C_2$ lies outside of~$D_4$.
   %\todo{Is this sufficiently obvious, or do we need a formal argument?} (Note that this statement does not hold if the disks may have arbitrary radii. If we pick~$D_4$ to be a sufficiently large disk, then we can in fact obtain a geometric realization, see \cref{SUBFIG:realize}.) 
   If~$D_4$ contains at most one vertex of~$\partial f$, say, at most the vertex in~$C_1\cap C_3$, then~$D_4$ does not intersect~$(D_1\cap D_3)\setminus f$, in contradiction to our requirement that the triple intersection~$D_1,D_3,D_4$ is realized in~$\mathcal{C}$. Thus, we conclude that~$D_4$ contains exactly two vertices of~$\partial f$, say, without loss of generality, the vertices on~$C_3$. But then~$D_4$ does not intersect~$D_3\setminus(D_1\cup D_2)$, a contradiction to our requirement that~$D_3\cap D_4$ must be realized in~$\mathcal{C}$. 
   Therefore, there is no geometric realization for this dichotomous ordinal~$K_{4,7}$.

   Note that it is essential that all disks in the realization must have the same radius. If the disks may have arbitrary radii, then we can obtain a realization of the example discussed above; see \cref{SUBFIG:realize}.
%
%% old proof below
%    Let $G=(V \cup W,E)$ be a complete bipartite graph, i.e., let $E=\{\{v,w\};\; v \in V, w\in W\}$. Let $V=\{v_1,v_2,v_3,v_4\}$, let $W=\{w_1,w_2,w_3,w_4,w_5,w_6,w_7\}$ and let the set $V(w_i)$, $i=1,\dots,7$ of neighbors of $w_i$ that correspond to short edges be all pairs containing $v_4$ and all triples, i.e.: $\{v_4,v_2\}$, $\{v_4,v_3\}$, $\{v_4,v_1\}$, $\{v_1,v_2,v_3\}$,  $\{v_1,v_2,v_4\}$, $\{v_1,v_3,v_4\}$, $\{v_2,v_3,v_4\}$. By Helly's Theorem \cite{helly:23}, it follows that if all triples are realized then also the whole set must be realized. We now show that it is not possible to realize all these eight sets simultaneously.
%
%    Since all pairs containing $v_4$, all triples, and the whole set must be realized, it follows that the subdrawing induced by $v_1,\dots,v_3$ must realize all singletons, all pairs,  and the triple. Thus, it looks like indicated in Fig.~\ref{SUBFIG:K3m}. Further, the circle with radius $\delta$ around $v_4$ must intersect each of these seven bounded regions. But that is not possible.
\end{proof}

\begin{lemma}\label{thm:k55}
   There exists a dichotomous ordinal~$K_{5,5}$ that is not realizable in~$\R^2$. %%does not admit a geometric realization.
\end{lemma}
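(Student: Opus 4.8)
\textbf{Proof proposal for \cref{thm:k55}.}
The plan is to reuse the circle-arrangement framework set up for \cref{thm:k47} and to push the counting and Helly-type arguments one step further, now with five unit circles instead of four. I would start by specifying the dichotomous ordinal $K_{5,5}$: take $U=\{u_1,\dots,u_5\}$ and $W=\{w_1,\dots,w_5\}$, and pick the five associated sets $V(w_i)$ to be, say, all five $4$-element subsets of $U$ (equivalently, $U_i=U\setminus\{u_i\}$). By Helly's Theorem the five disks $D_1,\dots,D_5$ then have a common point, so any realization must also realize the full set $U$ and (as always) the empty set $\emptyset$ as cells; together with the five prescribed quadruples this already forces at least $7$ cells, but the real obstruction comes from looking more carefully at how $C_5$ must interact with the sub-arrangement of $C_1,\dots,C_4$.

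The key steps, in order, would be: (1) By the bound of Spencer/Grünbaum that an arrangement of $n$ unit circles has at most $n(n-1)+2$ cells, five circles yield at most $22$ cells; this by itself is not enough, so (2) I would argue combinatorially about the sub-arrangement $\mathcal{C}^-$ induced by $C_1,\dots,C_4$. Since all four triple intersections $D_i\cap D_j\cap D_k$ (for $i,j,k\le 4$) and the quadruple intersection $D_1\cap D_2\cap D_3\cap D_4$ must be realized, $\mathcal{C}^-$ is forced into essentially a unique combinatorial type: four unit disks in ``general Venn position.'' (3) Then I would analyze the cell $f=D_1\cap D_2\cap D_3\cap D_4$ and the way $C_5$ must cut it: $C_5$ must cross $f$ (to separate $U$ from $U\setminus\{u_5\}$), and by the same chord argument as in \cref{thm:k47} — applying \cref{lem:chord} to $D_5$ and the edges of the convex polygon spanned by the vertices of $\partial f$ — at least one vertex of $\partial f$ must lie outside $D_5$. (4) Finally, a case distinction on how many vertices of $\partial f$ lie inside $D_5$, exactly as in the $K_{4,7}$ proof but now tracking which of the prescribed quadruples $D_5\cap D_i\cap D_j\cap D_k$ can still be realized, should yield a contradiction in every case: whichever ``side'' of $f$ gets cut off, one of the required quadruple intersections involving $u_5$ is no longer realizable because the relevant region $D_i\cap D_j\cap D_k\setminus f$ is disjoint from $D_5$.

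I expect the main obstacle to be step (2)–(3): pinning down that the sub-arrangement of four unit circles realizing all triples and the quadruple is combinatorially rigid enough, and then carrying out the geometric chord/convexity argument about $\partial f$ cleanly. The boundary $\partial f$ of the quadruple-intersection cell is bounded by arcs of up to four unit circles and need not be a ``triangle'' as in the $K_{4,7}$ case, so \cref{lem:chord} has to be applied to each edge of the convex hull of its vertices, and one must be careful that the relevant caps of $D_5$ lie on the correct side. A secondary subtlety is verifying that the particular choice of the five quadruples (rather than, say, mixing in some triples) is what makes the final case analysis go through; if the all-quadruples choice does not immediately contradict, I would fall back to a mixed family analogous to the $K_{4,7}$ construction (all $4$-subsets plus a few carefully chosen $3$-subsets through a common vertex) and rerun the same counting-plus-Helly argument. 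As in \cref{thm:k47}, I would close by remarking that allowing arbitrary radii destroys the obstruction, so it is essential here that all disks are unit disks.
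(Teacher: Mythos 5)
There is a genuine gap, and it is at the very first step: the dichotomous ordinal $K_{5,5}$ you propose (taking $V(w_i)=U\setminus\{u_i\}$ for all five $i$) \emph{is} realizable in $\R^2$, so no amount of arrangement analysis will produce a contradiction from it. To see this, place $u_1,\dots,u_5$ at the vertices of a regular pentagon of circumradius $\rho=0.1$ centered at the origin, and place $w_i$ at the point $x_i=-(1-\rho+\varepsilon)\,u_i/\lVert u_i\rVert$ for a small $\varepsilon>0$. Then $\lVert x_i-u_i\rVert=1+\varepsilon>1$, while for $j\ne i$ a direct computation (e.g.\ $\lVert(-0.905,0)-0.1(\cos 72^\circ,\sin 72^\circ)\rVert\approx 0.94$) gives $\lVert x_i-u_j\rVert<1$. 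So every region $\bigl(\bigcap_{j\ne i}\mathrm{B}(u_j,1)\bigr)\setminus \mathrm{B}(u_i,1)$ is nonempty and all five quadruples are realized simultaneously by unit disks. Your fallback plan does not rescue this: ``all $4$-subsets plus a few $3$-subsets'' already needs $|W|\ge 6$, so it leaves the class $K_{5,5}$; and the heuristic that Helly plus the cell-count bound forces a contradiction fails here because five circles admit up to $22$ cells, far more than the seven you need. The choice of the family of subsets is the entire difficulty of this lemma, and it cannot be waved away.

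For comparison, the paper's construction is quite different: it takes four \emph{triples} $\alpha(w_i)=\{u_i,u_{i\oplus 1},u_5\}$, all sharing the vertex $u_5$, together with the single quadruple $U\setminus\{u_5\}$. The impossibility argument is then not about the cell structure of the arrangement of the disks around $U$ at all; it studies the \emph{positions of the $W$-points}. Since $w_1,\dots,w_4\in \mathrm{B}(u_5,1)$ but $w_5\notin \mathrm{B}(u_5,1)$, convexity forces $w_1,\dots,w_4$ into an open halfplane through $w_5$, and a case analysis on their radial order around $w_5$ yields a contradiction — in one case via \cref{lem:chord} applied to a chord of $D_3$ contained in both $D_1$ and $D_5$, and in the other because $\partial D_1$ and $\partial D_3$ would have to cross three times. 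If you want to salvage your approach, you would first need a family of at most five subsets of $U$ that genuinely cannot be realized by five unit circles, which is precisely what the mixed triples-plus-quadruple family achieves.
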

\begin{proof}
   First we specify a dichotomous ordinal~$K_{5,5}$, see \cref{fig:k55counter} for an illustration. Let~$U=\{u_1,u_2,u_3,u_4,u_5\}$ and~$W=\{w_1,w_2,w_3,w_4,w_5\}$ denote the partition of the vertex set. To each~$w_i\in W$, for~$1\le i\le 5$, we associate a set~$\alpha(w_i)\subseteq U$ as follows:
   \[
   \alpha(w_i)=\{u_i,u_{i\oplus 1},u_5\}, \mbox{for~$1\le i\le 4$, and~}\alpha(w_5)=U\setminus\{u_5\},
   \]
   where~$i\oplus 1=(i\,\mathop{\mathrm{mod}}\,4)+1$. Now consider an arbitrary but fixed geometric realization of this dichotomous ordinal~$K_{5,5}$. In a slight abuse of notation we identify the vertices with the corresponding points in the geometric realization. Denote by~$D_i$ the unit disk centered at~$u_i$, which represents the region of points that are in short distance to~$u_i$. Let~$W^-=\{w_1,w_2,w_3,w_4\}$. As~$W^-\subset D_5$, whereas~$w_5\notin D_5$, by convexity of~$D_5$ all points of~$W^-$ lie in an open halfplane through~$w_5$. Let~$\phi_1,\phi_2,\phi_3,\phi_4$ denote the counterclockwise order of the points from~$W^-$ around~$w_5$ within this halfplane. By symmetry of~$W^-$ we may assume that~$\phi_1=w_1$ without loss of generality. We consider two cases.

\begin{figure}[htbp]
    \centering
    \begin{minipage}[t]{.48\textwidth}
    \centering
    \includegraphics[page=2]{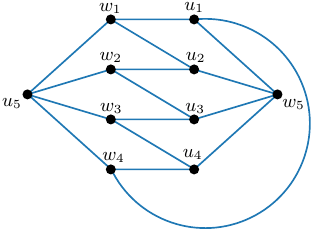}
    \subcaption{short edges of $K_{4,7}$\label{fig:k55counter:1}}
    \end{minipage}
    \hfill
    \begin{minipage}[t]{.48\textwidth}
    \centering
    \includegraphics[page=1]{complete_bipartite_short}
    \subcaption{short edges of $K_{5,5}$}
    \end{minipage}
    \caption{A dichotomous ordinal~$K_{4,7}$ and $K_{5,5}$, respectively, that does not admit a geometric realization. The drawn edges are the short edges. Edges between vertices labeled $u$ on one hand and $w$ on the other hand, %$u_1,\dots,u_5$ on one hand and  $w_1,\dots,w_5$ on the other hand,
    that are not drawn, are long.}
    \label{fig:k55counter}
\end{figure}

   \emph{Case~1: The set~$\alpha(\phi_4)$ contains one of~$u_1$ or~$u_2$.} Then by symmetry between~$u_1$ and~$u_2$ we may assume without loss of generality that~$\phi_4=w_4$. See \cref{fig:circular}~(left) for illustration. By convexity of~$D_1$, the triangle~$\triangle=w_1w_4w_5=\phi_1\phi_4w_5$ is contained in~$D_1$. As~$w_2,w_3\notin D_1$, we have~$\phi_2,\phi_3\notin\triangle$. As~$w_2,w_3,w_5\in D_3$ but~$w_1,w_4\notin D_3$, the circle~$\partial D_3$ crosses the line segment~$w_1w_4$ twice in~$D_5$, and it also crosses both of the line segments~$\phi_1\phi_2$ and~$\phi_3\phi_4$. Similarly, as~$\triangle\subset D_1$ but~$w_2,w_3\notin D_1$, the circle~$\partial D_1$ crosses both line segments~$w_5w_2$ and~$w_5w_3$, and it also crosses both of the line segments~$\phi_1\phi_2$ and~$\phi_3\phi_4$. Let~$c$ denote the chord of~$D_3$ induced by~$\phi_1\phi_4$. As~$D_1\supset c$, by \cref{lem:chord} we know that~$D_1$ contains the part~$A$ of~$D_3$ on the side of~$c$ that does not contain the center of~$D_3$. This is    
   %on one side of~$c$, namely 
   the part that contains~$w_5$ because the other part contains~$\phi_2,\phi_3\notin D_1$. Similarly, as~$D_5\supset c$, by \cref{lem:chord} it follows that also~$D_5\supset A$, which is a contradiction to~$w_5\notin D_5$. Thus, this case is impossible.

\begin{figure}[htbp]
    \centering%
    \includegraphics[page=1]{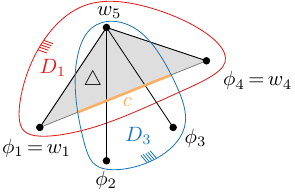}\hfil
    \includegraphics[page=2]{circularorder}
    \caption{The two cases in the proof of \cref{thm:k55}.}
    \label{fig:circular}
\end{figure}

   \emph{Case~2: The set~$\alpha(\phi_4)$ does not contain any of~$u_1$ or~$u_2$.} Then we have $\phi_4=w_3$ and both~$|\alpha(\phi_1)\cap\alpha(\phi_3)|=1$ and~$|\alpha(\phi_2)\cap\alpha(\phi_4)|=1$. By symmetry we may assume without loss of generality that~$\phi_2=w_2$ and~$\phi_3=w_4$. See \cref{fig:circular}~(right) for illustration. Consider the two disks~$D_1$ and~$D_3$. They both contain~$w_5$, and so the corresponding circles~$\partial D_1$ and~$\partial D_3$ intersect all of the rays~$w_5\phi_i$, for~$1\le i\le 4$. As~$\phi_1,\phi_3\in D_1\setminus D_3$ and~$\phi_2,\phi_4\in D_3\setminus D_1$, the rays~$w_5\phi_1$ and~$w_5\phi_3$ intersect~$\partial D_1$ before~$\partial D_3$, whereas the rays~$w_5\phi_2$ and~$w_5\phi_4$ intersect~$\partial D_3$ before~$\partial D_1$.
   But then~$\partial D_1$ and~$\partial D_3$ cross in each of the cones~$\phi_iw_5\phi_{i+1}$, for~$1\le i\le 3$, which is impossible, since any two distinct circles intersect in at most two points.

   As we arrived at a contradiction in both cases, we conclude that no geometric realization of this dichotomous ordinal~$K_{5,5}$ exists. If, however, we allow disks with arbitrary radii, %(a.k.a.~local model), 
   then a geometric realization exists, as depicted in \cref{fig:realizek55}.
\end{proof}

\begin{figure}[htbp]
    \centering%
    \includegraphics[page=2]{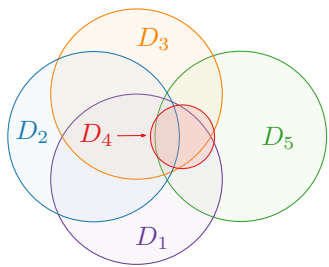}
    \caption{A realization of the dichotomous ordinal~$K_{5,5}$ from %the proof of 
    \cref{thm:k55} by disks of arbitrary radii.}
    \label{fig:realizek55}
\end{figure}

\section{Graphs with Constrained Short or Long Subgraphs}
%\section{Short Outerplanar Graphs and Short Subgraphs of the Grid}
\label{sec:outerplanar}
We 
show that every bipartite dichotomous ordinal graph admits a geometric realization if the short subgraph~$G_s$ %induced by the short edges 
is outerplanar or a subgraph of the rectangular grid or if the  long subgraph~$G_\ell$ %induced by the long edges 
is a caterpillar. In the first case, we construct a plane drawing of $G_s$ in which the BFS-layers are drawn on horizontal lines (\cref{SUBFIG:outerOuter}). In the second case, we suitably perturb the grid (\cref{fig:grid}). And in the third case, we suitably place points on~$\Sp^2$.

\begin{figure}
	\centering
    \begin{minipage}[t]{.48\textwidth}
    \centering
    \includegraphics[page=2]{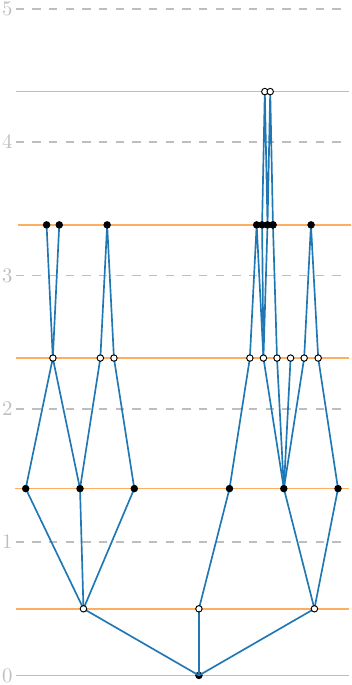}
    \subcaption{construction for a tree\label{SUBFIG:outerTree}}
    \end{minipage}
    \hfill
     \begin{minipage}[t]{.5\textwidth}
    \centering
    \includegraphics[page=1]{bipartite_outerplanar.pdf}
    \subcaption{geometric realization when $G_s$ is the graph in~(c)\label{SUBFIG:outerOuter}}
    \end{minipage}
    \hfill\\[2\baselineskip]
    
    \begin{minipage}[t]{.6\textwidth}
    \centering
    \includegraphics[page=3]{bipartite_outerplanar}
    \subcaption{bipartite outerplanar graph\label{SUBFIG:outerGraph}}
    \end{minipage}
	\caption{\label{fig:tree}How to construct a geometric realization of a bipartite dichotomous ordinal graph if the short edges induce an outerplanar graph.}
\end{figure}

\begin{theorem}\label{thm:bipartiteOuter}
		A bipartite dichotomous ordinal graph admits a geometric realization if the subgraph induced by the short edges is outerplanar.
% \end{theorem}
\end{theorem}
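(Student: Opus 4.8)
The plan is to realize $G_s$ by a straight-line \emph{proper level drawing} (vertices on equally spaced horizontal lines, edges joining only consecutive lines) in which short edges become short because they are almost vertical, while every other relevant pair of vertices is pushed far apart horizontally.

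First I would reduce to the case that $G_s$ is connected via \cref{obs:connected}; being a subgraph of the bipartite graph $G$, it is then a connected bipartite outerplanar graph. Fix an outerplanar embedding of $G_s$ and run a BFS from an arbitrary root $r$, obtaining layers $L_0,\dots,L_k$. Connectivity and bipartiteness imply that consecutive layers lie in the two distinct color classes of $G$, that every edge of $G_s$ joins two consecutive layers, and that $G$ has no edge joining two layers whose indices have the same parity. Normalize $\delta=1$ (\cref{obs:delta}) and draw $L_i$ on the line $y=i/2$. Then two vertices in layers at index-distance at least~$3$ are at distance at least $3/2>1$, so every long edge between such layers is realized for free; layers at index-distance exactly~$2$ carry no edge of $G$ at all; and two vertices in consecutive layers, having vertical offset $1/2$, are at distance at most~$1$ if and only if their horizontal span is at most $X^*:=\sqrt{3}/2$. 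Hence it suffices to choose the $x$-coordinates so that for all $i$ and all $u\in L_i,\ v\in L_{i+1}$ the span $|x(u)-x(v)|$ is smaller than $X^*$ when $uv\in E(G_s)$ and larger than $X^*$ when $uv\notin E(G_s)$; a final perturbation (again \cref{obs:delta}) removes equalities and distances exactly~$1$, and the drawing of $G_s$ so obtained is automatically plane.

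To pin down the $x$-coordinates, I would order the vertices of each layer $L_i$ by the order in which they are first visited along a walk around the outer boundary of $G_s$ starting at $r$. The goal is that with respect to these orders every \emph{band} $G_s[L_i,L_{i+1}]$ becomes ``non-crossing'', with each vertex adjacent to a contiguous block of the opposite side, after which one can place the vertices of each layer on its line so that $G_s$-neighbours in consecutive layers have horizontal span below $X^*$ and non-neighbours in consecutive layers have span above $X^*$. Concretely, I would process the BFS tree $T$ of $G_s$ top-down, reserving a contiguous $x$-interval for the descendants of each vertex (letting these intervals shrink geometrically with depth so that all of them stay nondegenerate), packing the children of a common vertex tightly around it, and exploiting that a vertex incident to a non-tree edge of $G_s$ sits at an end of its block, so that it can be close to that extra neighbour while remaining far from the rest of the block. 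Distances between vertices two or more layers apart need no attention, by the normalization above.

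The hard part is the underlying structural claim: that the outerplanar embedding of $G_s$ is compatible with an \emph{arbitrary} BFS layering in the strong sense used above, i.e.\ that the layers can be linearly ordered so that simultaneously (i)~each band is proper level planar and, with respect to the induced orders, an interval bigraph; (ii)~the orders agree on each shared layer; and (iii)~the ``outside'' endpoint of every non-tree edge lies at an end of its block. This is where I expect the excluded minors to enter: $K_{2,3}$-freeness caps common neighbourhoods and forbids the neighbourhoods of two vertices of one layer from interleaving, which is precisely what yields the block/interval structure and makes the laminar placement feasible, while the absence of a $K_4$ minor together with connectivity is what makes the per-band orders globally consistent. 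Establishing this compatibility lemma -- essentially, that a connected bipartite outerplanar graph with any BFS layering is proper level planar and, band by band, a unit-interval bigraph in a coherent way -- is the delicate step; once it is in hand, the coordinate assignment and the elementary distance bookkeeping complete the proof.
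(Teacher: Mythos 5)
Your plan coincides in outline with the paper's proof: reduce to connected $G_s$ via \cref{obs:connected}, use the BFS layers of $G_s$ drawn on horizontal lines, keep layers at index-distance at least three vertically separated, note that index-distance two carries no edge of the bipartite $G$, and handle consecutive layers by a laminar system of vertical strips so that within each ``band'' $G_s$-adjacency coincides with small horizontal span. Your distance bookkeeping for the fixed spacing $1/2$ and threshold $\sqrt{3}/2$ is correct as far as it goes, and the observation that every edge of $G_s$ joins consecutive layers is valid.

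However, the proposal has a genuine gap, and you name it yourself: the ``compatibility lemma'' --- that the $x$-coordinates can be chosen so that in every band a vertex is within horizontal distance $\sqrt{3}/2$ of exactly its $G_s$-neighbours, consistently across all layers, with nested strips, and with every face-closing vertex simultaneously close to both of its parents --- is asserted, not proved. That lemma \emph{is} the theorem: essentially the entire body of the paper's proof is devoted to it. The paper establishes it by a concrete induction on the layers, in which the two vertices of each ``pair'' bounding an internal face are kept \emph{close} (at a uniform per-layer distance $d$ chosen small enough that their unit circles intersect above the next line) and \emph{equidistant} from their common strip boundary, processing non-first pairs before first pairs so that no vertex is over-constrained; the face-closing vertex is then placed at the intersection point of its parents' circles, and the line $\ell_{k+1}$ is chosen adaptively at the height of the topmost such intersection (your fixed spacing would force you to redo this balancing act with explicit interval arithmetic, e.g.\ controlling how the pair-gaps evolve down the layers of a face so that the last pair stays within $\sqrt 3$ of each other while their private children stay far from the other parent). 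Your appeal to $K_{2,3}$- and $K_4$-minor-freeness is also only a heuristic: the paper never invokes excluded minors, but uses two concrete consequences of the outerplanar embedding --- each vertex has at most two parents, and the two vertices of a pair of an internal face are consecutive on their line --- and these must be combined with the quantitative placement argument, not merely cited. As written, the proposal is a correct plan that reduces the theorem to its hardest step and stops there.
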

	\begin{proof}
	Let $G=(V,E_s \cup E_\ell)$ be a bipartite dichotomous ordinal graph such that $G_s=(V,E_s)$ is outerplanar.
	By \cref{obs:connected}, we may assume that $G_s$ is connected. We root $G_s$ at an arbitrary vertex $r$. Let $V_k$, $k=0,\dots$ be the BFS layers of $G_s$ rooted at $r$, i.e., $V_0 = \{r\}$, $V_1$ is the set of neighbors of $r$, and $V_{k+1}$, $k \geq 1$ is the set of neighbors of the vertices in $V_{k}$ that are not already in $V_{k-1}$. 	We say the $w$ is a \emph{child} of $v$ and $v$ is a \emph{parent} of $w$ if $vw$ is an edge of $G_s$, $v \in V_k$ and $w \in V_{k+1}$ for some $k$. Observe that by outerplanarity each vertex has at most two parents. We  construct a planar drawing of $G_s$ with the following properties.
	\begin{itemize}
		\item 
		The root $r$ is drawn with y-coordinate $y_0= 0$.    
		All vertices in layer $V_k$, $k>0$ are on a horizontal line $\ell_k$ with y-coordinate $y_k$ strictly between $k-1$ and $k$.
		\item
		The distance between a vertex and its children is at most 1 while the distance between two vertices on consecutive layers is greater than 1 if they are not adjacent in $G_s$.
		\item
		For each vertex $v$ there is a vertical strip $S_v$ such that
        \begin{itemize}%[label=\alph*.]
			\item $v$ is in $S_v$,
			\item $S_w$ is contained in the union of the strips of $w$'s parents,
			\item $S_u$ and $S_v$ are internally disjoint if $u$ and $v$ are  on the same layer.
		\end{itemize}
	\end{itemize}
	We construct the drawing as follows (see \cref{SUBFIG:outerTree}):
	%Draw $r$ anywhere on the horizontal line $\ell_0$ with y-coordinate 0.
	Let $y_1=1/2$. Let $C$ be the circle with radius 1 centered at $r$. Let $x^\ell$ and $x^r$ be the x-coordinates of the intersection points of $C$ with $\ell_1$. Distribute  $V_1$ on $\ell_1$ 
	between $x^\ell$ and $x^r$. 
	$S_r$ is the whole plane.
	
	Assume now that we have drawn the vertices of layers $V_0,\dots,V_{k}$  and that we have established vertical strips for vertices in  $V_0,\dots,V_{k-1}$ for some $k$. 
	We now define the y-coordinate $y_{k+1}$ of the horizontal line $\ell_{k+1}$, place the vertices of layer $V_{k+1}$, and establish the vertical strips for the vertices on layer $k$. 
	
	% and let $t_v$ be the topmost point of $C_v$. 
	%Observe that $t_v$ has the same x-coordinate as $v$. Since $V_k$ is drawn on a horizontal line strictly between $k-1$ and $k$, the topmost points $t_v$, $v \in V_k$ are on a horizontal line $\ell_{k+1}'$ strictly between $k$ and $k+1$. 
	
	For each vertex $v$ in layer $k$, consider the circle $C_v$ with radius 1 centered at $v$. 
	Let $y$ be the y-coordinate of the topmost intersection point between any two cycles $C_v$ and $C_w$ with $v,w \in V_k$. If $V_k$ contains only one element or if no two circles $C_u$ and $C_v$, $u,v \in V_k$ intersect then set $y=k$.  Choose  $y_{k+1}$ such that $y \leq y_{k+1} < y_k + 1$ and $y_{k+1} > k$. 
	For a vertex $v \in V_k$ with parent $u$ let $x_v^\ell$ and $x_v^r$, respectively, be the leftmost and rightmost intersection point of~$C_v$ with~$\ell_{k+1}$. If $x_v^\ell$ is to the left of $S_u$ or $x_v^r$ is to the right of $S_u$, respectively, then replace $x_v^\ell$ or $x_v^r$ by the respective boundary of $S_u$.
	We call the part of $\ell_{k+1}$ strictly between $x_v^\ell$ and $x_v^r$ the \emph{range} $\ell_{k+1}(v)$ of $v$.
	Distribute the children of $v$ on $\ell_{k+1}(v)$, maintaining the embedding.

	Special care has to be taken if a vertex $w \in V_{k+1}$ has two parents $u$ and $v$, i.e., if $w$ closes an internal face. In that case, we have to make sure that on one hand the intersection point $p$ of $C_u$ and $C_v$ is contained in $\ell_{k+1}$. On the other hand, we require that $p$ is on the boundary of $S_u$ and~$S_v$. Then we can place $w$ on $p$. The two conditions are true if
	\begin{enumerate}
		\item \label{item:close}
		$u$ and $v$ are \emph{close}, i.e., if the distance between $u$ and $v$ is minimum between any two vertices on the same layer $V_k$ and additionally small enough such that the intersection point $p$ of $C_u$ and $C_v$ has y-coordinate greater than $k$ and
		\item \label{item:equi} $u$ and $v$ are equidistant from the common boundary of their strips $S_u$ and~$S_v$.
	\end{enumerate}

	In the following, we discuss how \cref{item:close,item:equi} can be achieved.
	Consider an internal face~$f$. Let $V_j,\dots,V_k$ be the BFS-layers that contain vertices incident to~$f$. Observe that layers $V_{j+1},\dots,V_{k-1}$ contain pairs of vertices incident to $f$, each of which is consecutive on the respective line. The two vertices contained in $V_{j+1},\dots,V_{k-1}$, respectively, are called a \emph{pair} of $f$. The pair in $V_{j+1}$ is the \emph{first pair} of $f$ and the pair in $V_{k-1}$ is the \emph{last pair} of $f$.  The goal is now to construct a drawing in which not only the last pair of each face is close and equidistant to the boundary of their strips, but actually all pairs of a face. 
	We proceed as follows. Line $\ell_0$ contains only the root $r$ and, thus, no pairs. So assume we have constructed the correct drawing up to line $\ell_k$. If the y-coordinate $y$ of the topmost intersection point among any two circles $C_u$ and $C_v$ for any two $u,v \in V_k$ is greater than $k$ then let   $\ell_{k+1}$ be the line with y-coordinate exactly $y$. 
	
	%This guarantees on one hand that we can close a face with edges of unit length, but also that pairs of faces in $V_{k+1}$ can be drawn arbitrarily close. Namely, either a pair is a first pair of a face: In this case we can anyway draw the respective vertices as close as we want. Or otherwise the pair in $V_k$ of the respective face was already close and, thus, their circles do intersect with y-coordinate $y$. 
	
	If we place a vertex $w$ with two neighbors $u,v \in V_k$ then, we draw $w$ on the topmost intersection point $t$ of $C_u$ and $C_v$. Recall that $t$ is on $\ell_{k+1}$ since $u$ and $v$ were close by construction. In a first step, the remaining vertices are distributed arbitrarily within the range of their predecessors. 
	
	If $V_{k+1}$ does not contain any pairs of any faces, we are done. Otherwise determine a distance $d>0$ that is (i) lower or equal the distance between any two vertices on line $\ell_{k+1}$,  (ii) lower or equal twice the distance between any vertex of a pair that is not a first pair and the strip boundary between the pair, and (iii)  small enough so that the respective circles intersect above the line with y-coordinate~$k+1$. 
	
	Now, we first process pairs of vertices that are not first pairs. Observe that each vertex can only be involved in one such pair. Otherwise the predecessor is not on the outer face. Let $u$ and $v$ be such a pair. Observe that neither $u$ nor $v$ can close another face. Again otherwise the predecessor is not on the outer face.
	If $u$ and $v$ had a distance greater than $d$ in the first place than we move $u$ and $v$ closer together so that they now have distance $d$ and are equidistant from the strip boundary between them. Observe that this is possible since the predecessors of $u$ and $v$ were already a pair and thus, by construction, close. Thus their respective circles intersect on $\ell_{k+1}$. Observe that with this operation, the minimum distance between any two vertices is still $d$. 
	
	Next, we process first pairs. Observe that no vertex can be involved in several first pairs. Otherwise it is not on the outer face. However, a vertex can be contained in a first and another pair and it is also possible that a vertex closing another face is contained in a first pair.  So some of the vertices might already have a fixed position. However, it is not possible that both vertices of a first pair are contained in another pair or close a face. Otherwise, the predecessor of the first pair was an internal vertex. So, we first process the first pairs that share a vertex with another pair or where one of its vertices close a face and reduce their distance to $d$. Then we can also process all the other first pairs. In the end all pairs have distance $d$ and $d$ is still the minimum distance between any two vertices on $\ell_{k+1}$. This concludes the construction for \cref{item:close,item:equi}.

	The strips are defined as follows: 
	Let $u \in V_{k-1}$ and let $v_1,\dots,v_h$ be the children of $u$ from left to right. We partition the strip $S_u$ into the strips $S_{v_1},\dots,S_{v_h}$ by splitting $S_u$ at an arbitrary spot between $x_{v_{i-1}}^r$ and $x_{v_i}^\ell$, for $i=2,\dots,h$. If a vertex has two parents then its strip is the union of the two portions obtained from its parents.
	
	By construction, all edges of $G_s$ have length at most one. 	It remains to show that edges in~$E_\ell$ have length greater than one. So let $w_jw_k \in E_\ell$ with $w_j \in V_j$, $w_k \in V_k$, and $j \leq k$. Since $G$ is bipartite, we can exclude $k=j$ and $k=j+2$. If $k-j \geq 3$ then $y_k-y_j> k-1-j \geq 2$. Hence, the distance between $w_j$ and $w_k$ is greater than one. 
	If $k-j=1$ then the parent  $w$ if $w_k$ is in $V_j$. By construction, we know that the distance between $w_j$ and any child of $w$ is greater than one.
\end{proof}

%\section{Short Subgraphs of the Rectangular Grid}\label{sec:grid}
	\begin{figure}[bhtp]
		\centering
		\includegraphics[page=1]{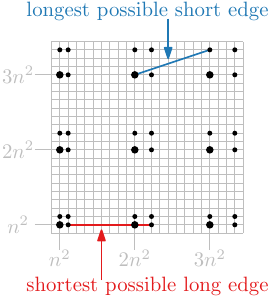}
		\caption{
  %Example of an $3 \times 3$ grid. 
  For each grid point $(i,j)$, $1\leq i \leq n$, $1\leq j \leq n$ there are four possible points. % with x-coordinates $in^2$  or $in^2+i$ and y-coordinates $jn^2$  or $jn^2+j$. 
  If $i>1$, the x-coordinate is $in^2$ if the edge between $(i-1,j)$ and $(i,j)$ is short and $in^2+i$ otherwise. %Analogously, 
  If $j>1$, the y-coordinate is $jn^2$ if the edge between $(i,j-1)$ and $(i,j)$ is short and $jn^2+j$ otherwise. }
		\label{fig:grid}
	\end{figure}
 \begin{theorem}
 \label{thm:grid}
        A dichotomous ordinal graph $G=(V,E_s \cup E_\ell)$ admits a geometric realization if the set of short edges induces a subgraph of the grid.
		%A dichotomous ordinal $n \times n$ grid admits a crossing-free geometric representation on an $n^3 \times n^3$ grid with threshold $\delta=n^2+\frac{1}{2}$.
	% \end{theorem}      
 \end{theorem}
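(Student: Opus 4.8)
The plan is to realize $G$ by a small controlled perturbation of a scaled copy of the integer grid, along the lines of \cref{fig:grid}. Since $G_s$ is a subgraph of the grid, we may identify $V$ with a finite set of points of $\Z^2$; after translating and deleting empty grid rows and columns (which keeps $E_s$ within the grid edges), every coordinate is a positive integer at most $n:=|V|$, and we write $v=(i_v,j_v)$. Fix a large integer $B$; it will be convenient to take $B=4n^2$. I would place each vertex $v=(i,j)$ at the point $p_v=(x_v,y_v)$ where $x_v=iB$ if the grid edge between $(i-1,j)$ and $(i,j)$ is an edge of $G$ lying in $E_s$, and $x_v=iB+i$ in every other case (in particular if $i=1$, or that edge is absent, or it is long); symmetrically, $y_v=jB$ if the grid edge between $(i,j-1)$ and $(i,j)$ lies in $E_s$, and $y_v=jB+j$ otherwise. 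The idea is that grid-adjacent points sit at distance about $B$ in the relevant coordinate, and the $+i$ (resp.\ $+j$) nudge pushes this gap slightly below $B$ for a short grid edge and slightly above $B$ for a long one. Since $i,j\le n<B$, the intervals $[iB,iB+i]$ are pairwise disjoint over distinct $i$, and similarly for $j$; hence the points $p_v$ are pairwise distinct and the straight-line drawing is well defined.

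The core of the argument is to show that the threshold $\delta:=\sqrt{B^2+n^2}$ separates short edges from long edges. The short direction is easy: every short edge is a grid edge, say the horizontal one between $(i-1,j)$ and $(i,j)$ (the vertical case is symmetric, exchanging the two coordinates). As the edge is short, $x_{(i,j)}=iB$, so the two $x$-coordinates differ by at most $B$, while the two $y$-coordinates differ by at most $j\le n$; thus the edge has length strictly less than $\delta$. For the long direction, take a long edge $uv$ with $u=(a,b)$, $v=(c,d)$, and distinguish three cases by how $u,v$ sit in the grid: (i)~$uv$ is itself a grid edge (hence a long one), in which case the $+i$ or $+j$ nudge forces the relevant coordinate gap to be at least $B+1$; (ii)~$u$ and $v$ differ by at least $2$ in some coordinate, forcing a gap of at least $2B-n$ there; (iii)~$u$ and $v$ are diagonal neighbors, differing by exactly $1$ in each coordinate, forcing gaps of at least $B-n$ in both coordinates and hence length at least $\sqrt2\,(B-n)$. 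In each case the length exceeds $\delta$, because the inequalities $\delta<B+1$, $\delta<2B-n$, and $\delta<\sqrt2\,(B-n)$ all reduce, after squaring, to polynomial inequalities in $n$ that hold for $B=4n^2$. Hence the edges of length at most $\delta$ are exactly the short edges, and the drawing together with $\delta$ is a geometric realization of $G$.

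I expect the only delicate point to be case (iii): two vertices that are diagonal neighbors in the grid embedding but non-adjacent in $G$ get drawn at distance only about $\sqrt2\,B$, which must still exceed the length (about $B$) of a genuine short edge. This is exactly what forces $B$ to be chosen comfortably larger than $n$; a base of order $n^2$ is more than enough, and verifying the three inequalities is then routine. The remaining steps -- embedding $V$ in $\Z^2$ with bounded coordinates, checking the $p_v$ are distinct, and the bookkeeping of the case analysis for long edges -- are straightforward, and the degenerate cases $E_s=\emptyset$ or $E_\ell=\emptyset$ are covered by the same argument (or are immediate).
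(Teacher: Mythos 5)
Your proposal is correct and follows essentially the same construction as the paper: place each vertex at a scaled grid position ($in^2$ in the paper, $iB$ with $B=4n^2$ for you) and nudge by $+i$ or $+j$ exactly when the incoming grid edge is not a short edge, so that the nudge encodes short versus long for grid edges while the scale separation handles all other vertex pairs. Your explicit three-way case analysis for long edges (grid edge, coordinate difference at least $2$, diagonal neighbours) together with the larger base makes the diagonal case airtight for every $n$, whereas the paper's choice of base $n^2$ only yields the claimed bound for $n\ge 4$; apart from this quantitative tightening the arguments coincide.
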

	\begin{proof}
	Extend $G_s=(V,E_s)$ by the remaining grid edges and require the new edges to be long. 
	We now perturb the grid.
	For each grid point $(i,j)$, $1\leq i \leq n$, $1\leq j \leq n$ of the original grid,  there are  four possible choices with the x-coordinates $in^2$  or $in^2+i$ and the y-coordinates $jn^2$  or $jn^2+j$. See Fig.~\ref{fig:grid}.
	In the case where  $i=1$ 
	choose as x-coordinate $n^2$ or $n^2+1$. Similarly, if 
	$j=1$  choose as y-coordinate  $n^2$ or $n^2+1$.
	If $i>1$, we choose the x-coordinate $in^2$ if the edge between $(i-1,j)$ and $(i,j)$ is short and $in^2+i$ otherwise. Analogously, if $j>1$, we choose the y-coordinate $jn^2$ if the edge between $(i,j-1)$ and $(i,j)$ is short and $jn^2+j$ otherwise. 
	
	Each long edge has length at least $n^2+1$ while each short edge has length at most $\sqrt{(n^2)^2 + n^2} < \sqrt{(n^2)^2 + n^2 + \frac14} = n^2 + \frac{1}{2}$.  By scaling the drawing, we obtain  a geometric realization of the  dichotomous ordinal graph.
\end{proof}
	
	%\begin{corollary}\label{cor:grid}
	%	A dichotomous ordinal graph $G$ admits a geometric representation if the set of short edges induces a subgraph of the grid.
	%\end{corollary}
	%\begin{proof}
		%Let $G$ be a dichotomous ordinal graph such that the short edge are a subgraph of the $n\times n$ grid. 
      %  Extend $G$ by the remaining grid edges and require the new edges to be long. Draw the grid as described in the proof of Theorem~\ref{theo:grid}. Non-grid edges are then~long.
	%\end{proof}

%A realization is easy to obtain if the long subgraph is sufficiently simple.

\begin{theorem}
\label{thm:long}
  A dichotomous ordinal graph admits a geometric realization in~$\R^2$ and on~$\Sp^2$ if each component of the long subgraph forms a caterpillar.
\end{theorem}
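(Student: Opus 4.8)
The goal is to place the vertices so that every long edge is longer than a common threshold $\tau>0$ and every short edge is shorter than $\tau$. It suffices to do this in $\R^2$: scaling such a drawing down to tiny diameter and planting it inside a small spherical cap distorts every pairwise distance by a factor $1+o(1)$, which, applied to finitely many edges whose lengths are bounded away from $\tau$, preserves all the strict length comparisons and thus yields a realization on $\Sp^2$. We may also assume $G$ connected, treating a disjoint union componentwise after rescaling the pieces to a common threshold. Then the caterpillars $C_1,\dots,C_m$ forming $G_\ell$ have vertex sets partitioning $V(G)$; in each $C_t$ we fix a spine and root $C_t$ at one spine endpoint. The structural facts we exploit are that $G_\ell$ is a forest, hence rootable so that every non-root vertex has a unique long-parent, and that every caterpillar is dominated by its spine, so each leg has exactly one long-neighbour and is otherwise unconstrained by long edges.

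For the construction I would build the drawing by an inductive placement following the rooted caterpillar forest, processing vertices by increasing depth and interleaving the $C_t$'s; this mirrors the strip-based layout in the proof of \cref{thm:bipartiteOuter}, with the roles of short and long edges exchanged. When a vertex $v$ is inserted it receives, besides a position, a \emph{target region} of admissible positions for vertices to be placed near it later, and one maintains the invariants that the region of $v$ is far from the region of its long-parent (so long edges come out long), that two already-placed short-adjacent vertices are forced within $\tau$ of each other, and --- the crucial extra freedom --- that a leg need not sit near its BFS layer but may be placed inside the combined regions of its short-neighbours, since its single long edge is already accounted for. Bounding the diameter of the final drawing legitimizes the transfer to $\Sp^2$ described above.

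I expect the main obstacle to be reconciling the two families of constraints. Making the long edges long is the easy part: they number at most $n-1$, and each is pinned either to a fixed pair of consecutive spine vertices or to a leg and its single spine-neighbour. The difficulty is that short edges are entirely unconstrained and may join vertices that the layering wants to separate --- spine vertices far apart along the spine, a leg and a vertex in another caterpillar, or a leg and a distant part of its own caterpillar. Here the caterpillar hypothesis is indispensable: since each leg carries only one long constraint it can be slid freely toward its short-neighbours, and since any short path between the endpoints of a long edge has at least two edges, the triangle inequality never forces a long edge below $\tau$, leaving a working margin. Turning this into a single inductive invariant that simultaneously guarantees the large separations required by long edges and the small windows required by short edges --- and in particular copes with a leg whose short-neighbourhood is scattered over many layers or components --- is the technical heart of the argument.
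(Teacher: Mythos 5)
Your proposal is not a complete proof: you explicitly defer the key step (``turning this into a single inductive invariant \dots is the technical heart of the argument''), and that is exactly the part that cannot be waved through. The difficulty you identify at the end is real and, in the framework you set up, fatal. Because the short subgraph is completely arbitrary, it may be (for example) the complement of the caterpillar forest; then all but $O(n)$ pairs of vertices must lie within distance $\tau$ of one another, which forces the entire point set into a region of diameter barely more than $\tau$ (the short subgraph has diameter $2$, so every two points are within $2\tau$). No layered or strip-based layout in the style of \cref{thm:bipartiteOuter} --- where vertices are spread across BFS layers and separation grows with layer distance --- can coexist with that: there is no room to ``spread out'' the long edges, and your invariant that a vertex's target region be far from its long-parent's region while simultaneously sitting inside the short-neighbourhoods of arbitrarily many other vertices has no reason to be satisfiable. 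The caterpillar hypothesis does not rescue a layering argument here; it constrains only the long edges, while your invariants are driven by the short ones.

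The paper's proof sidesteps all of this with one idea you are missing: place \emph{every} vertex on a single circle $C$ of radius $\tfrac12+\varepsilon$ (threshold $1$). On such a circle, two points are at distance at least $1$ only if they are nearly antipodal --- the set of points far from $p$ is a tiny arc of angular width $2\varepsilon'$ opposite $p$. So ``short'' is the default and needs no attention whatsoever; the only task is to arrange that the nearly-antipodal pairs are exactly the long edges. A caterpillar is precisely realizable this way: the spine $v_1,\dots,v_k$ is laid out so that each $v_i$ sits at angle $\pi+\varepsilon'$ from $v_{i-1}$ (hence far from $v_{i-1}$ and from nothing else among the spine, once $\varepsilon'<\pi/(2k)$), and the legs of $v_i$ are clustered in a tiny arc near the point antipodal to $v_i$, offset by $i\alpha$ so that these arcs are pairwise disjoint, far from $v_i$, and close to everything else. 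Since the construction lives on a circle, it serves simultaneously as a realization in $\R^2$ and on $\Sp^2$, making your small-cap transfer unnecessary (though that transfer argument is itself sound). If you want to salvage your write-up, replace the layered induction by this circle construction; the structural observations you make about caterpillars (unique long-parent, legs with a single long neighbour) are correct and are exactly what the antipodal-arc bookkeeping uses.
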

\begin{proof}\label{proof:thmlong}
  For our construction we may assume without loss of generality (cf.~\cref{obs:delta}) that the long edges are those of length at least one. Let~$G$ be a caterpillar and let~$v_1,\ldots,v_k$ be the spine path of~$G$. We draw the vertices on a circle~$C$ of radius~$\frac12+\varepsilon$ centered at the origin~$O$, for some suitably small~$\varepsilon>0$. For a point~$p\in C$ there are two points~$\mathrm{ucw}(p),\mathrm{uccw}(p)\in C$ that are at unit distance to~$p$ such that~$\measuredangle pO\mathrm{ucw}(p)=\pi+\varepsilon'$ and~$\measuredangle pO\mathrm{uccw}(p)=\pi-\varepsilon'$, for some~$\varepsilon'$ that depends on~$\varepsilon$. The points on~$C$ that are at distance at least one to~$p$ lie on the circular arc~$\mathrm{uccw}(p)\mathrm{ucw}(p)$, which spans an angle of~$2\varepsilon'$.
  
  We place~$v_1$ arbitrarily on~$C$ and then~$v_i$, for~$i=2,\ldots,k$, at~$\mathrm{ucw}(v_{i-1})$. Choosing~$\varepsilon$ so that~$\varepsilon'<\pi/(2k)$ ensures that we obtain a valid representation of the spine path of~$G$, that is, the long edges spanned by the points are exactly the edges of the spine path. It remains to place leaves that may be attached to the spine vertices. Let~$\alpha=\varepsilon'/(2k)$. We place all (non-spine) leaves adjacent to~$v_i$ in~$G$, for~$i\in\{1,\ldots,k\}$, close to the point~$v_i'$ that is obtained by shifting the point antipodal to~$v_i$ in~$C$ counterclockwise by an angle of~$i\alpha$ along~$C$. More precisely, we place these leaves uniformly spaced within an arc~$A_i$ of~$C$ with central angle~$\alpha/4$ whose midpoint is~$v_i'$. All points on~$A_i$ are far from~$v_i$ but close to all points in~$A_{i-1}$. Thus, we obtain a valid representation of~$G$ on~$C\simeq\Sp^2$.  
\end{proof}

\section{Pandichotomic Dimension and Degeneracy}\label{sec:3gen}

In this section, we exhibit a strong connection between the degeneracy of a graph and its pandichotomic dimension. Among others, we show %(\cref{thm:3genr3,cor:cdbound}) 
that the pandichotomic dimension is bounded both from above and from below by a linear function of the degeneracy, and we give bounds on the coefficients in these linear functions.

%\begin{nolinenumbers}
%\textcolor{red}{
% MH: I don't think we need this meta theorem here. Keeping the separate statements seems fine, or even better to me.\\
% SC: We were discussing this yesterday, so I added it. It shows the reader the results at one glance. I'm in favour for that, but I'm definitely not going to fight for it. Do the changes you think are reasonable. }
%\end{nolinenumbers}
%
%On one hand, every $d$-degenerate graph is pandichotomous in~$\R^{d}$ and~$\Sp^{d-1}$ and these bounds are tight for the sphere and almost tight for the Euclidean space. On the other hand every graph that is pandichotomous in~$\R^{d}$ is $\lfloor 14.7d \rfloor$-degenerate. This implies that the pandichotomic dimension is upper and lower bounded by a function that is linear in the degeneracy. We start with a negative example in~$\R^2$.

%\begin{restatable}{theorem}{thmdegenmain}\label{thm:degenmain}
%  Let $d\geq 2$.
%  \begin{itemize}
%  \item
%  The pandichotomic spherical dimension of a $d$-degenerate graph is at most $d-1$ and not every bipartite $d$-degenerate graph is pandichotomous on~$\Sp^{d-2}$.
%  \item
%  The pandichotomic Euclidean  dimension of a $d$-degenerate graph is between $d/14.7$ and~$d$ and not every bipartite $d$-degenerate graph is pandichotomous in~$\R^{d-2}$. Moreover, not every 3-degenerate graph is pandichotomous in $\R^2$. 
%  \end{itemize}
%\end{restatable}

\begin{theorem}\label{thm:3genr3}
  Every $d$-degenerate graph, for~$d\ge 2$, is pandichotomous on~$\Sp^{d-1}$ and in~$\R^d$.
\end{theorem}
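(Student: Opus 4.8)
The plan is to place the vertices of $G$ one at a time following a \emph{degeneracy order} $v_1,\dots,v_n$ --- one in which every $v_i$ has at most $d$ neighbors among $v_1,\dots,v_{i-1}$, its \emph{back-neighbors}. By \cref{obs:connected} we may assume the short subgraph $G_s$ is connected, and by \cref{obs:delta} that in the Euclidean case the threshold equals $1$ and no two vertices ever land at distance exactly~$1$; at the end, \cref{obs:connected} lets us glue the components of a disconnected $G_s$ by translating their realizations far apart. For $\Sp^{d-1}$ we additionally fix a suitable threshold $\delta$ once and for all (here no global rescaling is available); a geodesic $\delta$-ball is the intersection of $\Sp^{d-1}$ with an affine halfspace, so the objects below behave like spheres and the same reasoning applies, with the curvature of $\Sp^{d-1}$ being exactly what makes dimension $d-1$ suffice there rather than $d$.

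At the step when $v_i$ is inserted, let its back-neighbors be $N_i=S_i\cup L_i$, split into short ($S_i$) and long ($L_i$); then $|N_i|\le d$. To place $v_i$ correctly we must drop it into a cell of the arrangement of the $|N_i|$ unit spheres $\mathrm S(u,1)$, $u\in N_i$, that lies inside $\mathrm B(u,1)$ exactly for $u\in S_i$. Such a cell exists with room to spare whenever these $\le d$ spheres pass \emph{transversally through one common point} $q$: locally around $q$ the arrangement then looks like $\le d$ hyperplanes in general position through a point of $\R^d$, so every one of the $2^{|N_i|}$ inside/outside patterns occurs as a full-dimensional cell incident to $q$, and we choose the one we want. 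This is precisely where dimension and degeneracy meet: $d$ spheres can be made to meet transversally in a single point of $\R^d$, and $d$-degeneracy ensures there are at most $d$ of them. To force the spheres to meet at such a $q$ I exploit freedom built up over the earlier steps: I maintain that every connected component of the already-placed short graph is realized as a \emph{rigid cluster} that I may still move by an isometry of $\R^d$; the vertices of $N_i$ lie in at most $d$ such clusters, and I reposition those clusters so that their vertices incident to $v_i$ sit on a small common unit sphere centered at $q$, in mutually generic directions. Placing $v_i$ near $q$ then realizes all of $v_i$'s back-edges and fuses the clusters and $v_i$ into the next rigid cluster.

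The delicate point --- the step I expect to fight with --- is this repositioning: pulling the clusters close enough for a common transversal point $q$ to exist competes with edges that were already satisfied, in particular long edges running \emph{inside} a cluster or \emph{between} two of them, which must all keep length $>1$. Making this compatible calls for careful bookkeeping: pinning each cluster at the specific vertex of $N_i$ it contains (not at an arbitrary point), so that only distances to the pinned vertices constrain $v_i$ and there is slack toward the rest of each cluster; spacing the pinned directions around $q$ far enough that inter-cluster long edges survive; and scheduling the whole process with a hierarchy of ever-finer scales so these competing demands can be met at once. I would also try to pick the degeneracy order so that $G_s$ stays connected while it is built up --- plausible since $G_s$ is connected --- which bounds the number of clusters that ever need merging and thus keeps the repositioning under control. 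The spherical statement then follows by running the same argument with the fixed threshold $\delta$, using that geodesic $\delta$-caps cut $\Sp^{d-1}$ in intersections with affine hyperplanes.
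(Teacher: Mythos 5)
Your overall strategy---insert the vertices in a degeneracy order and use that each new vertex has at most $d$ placed neighbors---matches the paper's, and your local observation is correct: if the $\le d$ unit spheres around the back\-/neighbors of $v_i$ met transversally at a common point $q$, all $2^{|N_i|}$ inside/outside patterns would appear as full-dimensional cells at $q$. But the proof has a genuine gap exactly where you flag it: you never establish that such a common point $q$ can be arranged, and the rigid-cluster repositioning you propose faces concrete obstructions. First, two back-neighbors $u,w$ of $v_i$ may lie in the \emph{same} rigid cluster; their relative position is then frozen, and nothing in your invariant prevents $\dist uw>2$ (e.g.\ a short path realized with each edge of length close to $1$ stretches out), in which case no point of $\R^d$ is at distance $\le 1$ from both and the step fails with no isometry of the cluster able to help. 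Second, even when the pinned vertices can be brought onto a common unit sphere, the inter-cluster long edges and the future movability of the merged cluster impose constraints that your ``hierarchy of ever-finer scales'' only gestures at. Third, the reduction of the spherical case to ``geodesic balls are halfspace intersections'' is asserted, not proved. As written, the crux of the argument is a to-do list rather than a construction.

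The missing idea that makes all of this unnecessary is to confine \emph{every} vertex to a fixed sphere $S=\{p\in\R^d:\lVert p\rVert=\sqrt2/2\}$. On $S$ one has $\dist pq^2=1-2\,\vec v_p\vec v_q$, so ``distance less than one'' becomes ``positive scalar product'', i.e.\ an open hemisphere; the diameter of $S$ is $\sqrt2<2$, so the large-cluster obstruction above cannot arise. Maintaining that every $d$ placed points have linearly independent position vectors, one handles $v_i$ by replacing each far back-neighbor by its antipode and solving the linear system $\vec v_{q_j'}\vec x=1$ ($j\le t\le d$), whose normalized solution is a valid position; no previously placed vertex is ever moved, there is no transversality condition on spheres to engineer, and the statement for $\Sp^{d-1}$ falls out of the same construction by taking an arc of angle $\pi/2$ as the unit. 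I would encourage you to rebuild your argument around this linear-algebraic placement rather than trying to complete the repositioning scheme.
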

\begin{proof}
  Let~$G$ be a $d$-degenerate dichotomous ordinal graph, for~$d\ge 2$, and let~$u_1,\ldots,u_n$ be a vertex ordering such that~$u_i$ has at most~$d$ neighbors in~$\{u_1,\ldots,u_{i-1}\}$, for~$1\le i\le n$. To obtain a geometric realization of~$G$ we place the vertices on the sphere~$S=\{p\in\R^d\colon\lVert p\rVert=\sqrt{2}/2\}$ such that for every $d$-tuple of vertices the corresponding vectors are linearly independent. For a vertex~$p\in S$ denote by~$\vec{v}_p$ the corresponding vector; %. For any vertex~$p\in S$, 
  the points in distance less than one to~$p$ on~$S$ form a hemisphere, which consists of all points~$q\in S$ such that~$\vec{v}_p\vec{v}_q>0$, where $\vec{v}_p\vec{v}_q$ denotes the scalar product between $\vec{v}_p$ and $\vec{v}_q$. 
   
  We place the vertices~$u_1,\ldots,u_n$ in this order. The first vertex~$u_1$ is placed arbitrarily on~$S$. Then for each~$u_i$, for~$2\le i\le n$, we have %\todo{FM: is set $Q$ given by induction? Are these the neighbors of $u_i$?} 
  a set~$Q=\{q_1,\ldots,q_t\}\subset S$ of points, which correspond to the~$t\le d$ neighbors of~$u_i$ in~$u_1,\ldots,u_{i-1}$.
  %\todo{FM: there is a bit of confusion about points of the sphere, vertices, and corresponding vectors.},  and 
  Further, for each point in~$Q$ we know whether it should be close to or far from~$u_i$. Now we need to find a suitable point on~$S$ that satisfies these constraints. We obtain the set~$Q'=\{q'_1,\ldots,q'_t\}$ from~$Q$ by setting~$q'_j=q_j$, if~$q_j$ should be close to~$u_i$ and setting~$q'_j$ to the point antipodal to~$q_j$ on~$S$ if~$q_j$ should be far from~$u_i$. As~$Q$ is linearly independent, so is~$Q'$. Furthermore, the linear system~$\vec{v}_{q_j'}\vec{x}=\vec{1}$, for~$1\le j\le t$, has a (unique) solution~$\vec{x}_0$. Normalizing~$\vec{x}_0$ we obtain a point~$r\in S$, such that~$\vec{v}_r\vec{v}_{q_j'}>0$, for all~$1\le j\le t$. Thus, we can place~$u_i$ at~$r$ such that all constraints with respect to~$Q$ are satisfied. The points on~$S$ that lie in a $k$-dimensional subspace spanned by~$k$ vertices, for~$k<d$, can easily be avoided when selecting~$r$. In this way we ensure that for all $d$-tuples of vertices on~$S$ the corresponding vectors are linearly independent. %Observe that t
  This realization also works for geodesic distances on~$\Sp^{d-1}$ if we take an arc with angle~$\pi/2$ as a unit. 
\end{proof}

%As a special result on bipartite planar graphs, we obtain the following corollary.

\begin{corollary}\label{cor:planar-on-sphere}
  Every bipartite planar graph is pandichotomous on~$\Sp^2$ and in~$\R^3$. 
  %%dichotomous ordinal graph admits a geometric realization on~$\Sp^2\subset\R^3$. 
\end{corollary}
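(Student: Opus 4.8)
The plan is to derive this directly from \cref{thm:3genr3}, since a bipartite planar graph is $3$-degenerate. Concretely, I would first recall the standard consequence of Euler's formula: a bipartite planar graph on $n\ge 3$ vertices has at most $2n-4$ edges, because every face of a bipartite plane graph is bounded by a closed walk of length at least~$4$. This bound is inherited by every subgraph (every subgraph of a bipartite planar graph is again bipartite and planar), and hence the sum of degrees in any subgraph~$H$ on $n'$ vertices is at most $2(2n'-4)<4n'$; therefore $H$ contains a vertex of degree at most~$3$. Iterating this observation — repeatedly delete a vertex of degree at most~$3$ — shows that every bipartite planar graph is $3$-degenerate.

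It then suffices to invoke \cref{thm:3genr3} with $d=3$ (which satisfies the hypothesis $d\ge 2$): every $3$-degenerate graph is pandichotomous on~$\Sp^{2}$ and in~$\R^{3}$. Since every bipartite planar graph is $3$-degenerate, it is pandichotomous on~$\Sp^2$ and in~$\R^3$, which is exactly the claim.

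There is essentially no obstacle here; the statement is a one-line corollary once the degeneracy bound is in place. The only point that warrants a word of care is making sure the edge bound is applied to \emph{every} subgraph rather than just to the graph itself, so that the degeneracy (and not merely the existence of one low-degree vertex) is bounded — but this is immediate because bipartiteness and planarity are both closed under taking subgraphs. One could alternatively phrase the whole argument via the well-known fact that planar bipartite graphs have degeneracy at most~$3$ and cite it, but including the two-line Euler-formula justification keeps the corollary self-contained.
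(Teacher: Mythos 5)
Your proof is correct and follows exactly the paper's route: Euler's formula gives $|E|\le 2n-4$, the handshaking lemma gives average degree below $4$, this yields $3$-degeneracy, and \cref{thm:3genr3} with $d=3$ finishes. Your explicit remark that the edge bound must be applied to every subgraph (which the paper leaves implicit when it jumps from ``average degree less than $4$'' to ``$3$-degenerate'') is a welcome clarification rather than a deviation.
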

\begin{proof}
  Let~$G=(V,E)$ be a planar bipartite graph on~$n$ vertices. As a consequence of Euler's formula, we have~$|E|\le 2n-4$. By the handshaking lemma~$\sum_{v\in V}\deg(v)=2|E|\le 4n-8$ and, thus, the average degree in~$G$ is strictly less than~$4$. Consequently, every planar bipartite graph is~$3$-degenerate, and the statement follows by \cref{thm:3genr3}.
\end{proof}

We contrast the upper bound on the pandichotomic dimension in terms of the degeneracy provided by \cref{thm:3genr3} by an almost matching lower bound in the Euclidean (\cref{thm:3genr3-ub}) and a matching lower bound in the spherical case (\cref{thm:3genr3-ub-s}), even for bipartite graphs. Note that every lower bound example is not  just sporadic but spans an infinite family of examples, given that being pandichotomous is a monotone graph property.

%\thmthreegenrthreeub*
\begin{theorem}\label{thm:3genr3-ub}
  For every~$d\ge 2$, there exists a $(d+2)$-degenerate bipartite %dichotomous ordinal 
  graph that is not % realizable 
  pandichotomous in~$\R^d$.
\end{theorem}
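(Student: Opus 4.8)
The plan is to use a complete bipartite graph as the witness. Set $G = K_{d+2,\,2^{d+2}}$ with vertex classes $U = \{u_1,\dots,u_{d+2}\}$ and $W = \{w_X : X \subseteq U\}$, and declare the edge $u_i w_X$ short if $u_i \in X$ and long otherwise. Since $|U| = d+2$, repeatedly deleting a vertex of the (larger) part $W$ — each such vertex having degree $d+2$ in the current subgraph — exhibits $G$ as $(d+2)$-degenerate, and $G$ is bipartite; so the whole content is to show that this dichotomous ordinal graph has no geometric realization in $\R^d$.

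I would run this through the arrangement formalism of \cref{sec:completeBipartite}. In any realization the vertices of $U$ become points $u_1,\dots,u_{d+2}\in\R^d$, and the presence of $w_X$ for every $X\subseteq U$ forces the arrangement $\mathcal{C}$ of the unit spheres $C_i=\partial\mathrm{B}(u_i,1)$ to realize \emph{every} subset of $U$: for each $X\subseteq U$ there must be a full-dimensional cell of $\mathcal{C}$ lying inside exactly those $\mathrm{B}(u_i,1)$ with $u_i\in X$. Hence it suffices to prove that $d+2$ unit spheres in $\R^d$ can never realize all $2^{d+2}$ subsets of their centers. The key observation is that if $X\subseteq U$ is realized by some cell, then choosing a point $x$ in that cell gives $X=\{u_i : x\in\mathrm{B}(u_i,1)\}=\{u_i : u_i\in\mathrm{B}(x,1)\}$, so $X$ is exactly the set of centers captured by the unit ball $\mathrm{B}(x,1)$. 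Thus the realizable subsets form a subfamily of the subsets of a $(d+2)$-point set that can be cut out by unit balls of $\R^d$. Unit balls of $\R^d$ form a subfamily of the family of all balls of $\R^d$, whose VC dimension is at most $d+1$; so by the Sauer--Shelah lemma the number of such subsets is at most $\sum_{i=0}^{d+1}\binom{d+2}{i}=2^{d+2}-1<2^{d+2}$, which is the desired contradiction.

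I do not expect a serious obstacle here; the points needing care are (i) the standard facts that balls in $\R^d$ have VC dimension at most $d+1$ and that $K_{d+2,\,2^{d+2}}$ has degeneracy exactly $d+2$, and (ii) the (easy) transfer from "realized by a full-dimensional cell'' to "equals the inside/outside pattern of some point'', which is what makes the VC/Sauer--Shelah bound applicable to the quantity we care about. As an alternative to Sauer--Shelah one could instead bound directly the number of cells in an arrangement of $n$ unit spheres in $\R^d$, generalizing the estimate $n(n-1)+2$ for circles already used in the proof of \cref{thm:complete}; for $n=d+2$ this count again falls below $2^{d+2}$. For $d=2$ the construction specializes to a larger cousin of the graph in \cref{thm:k47}; the interest of the general statement is that, together with \cref{thm:3genr3}, it determines the degeneracy threshold for guaranteed realizability in $\R^d$ up to an additive gap of $2$.
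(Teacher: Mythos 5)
Your construction is exactly the paper's: the witness is $K_{d+2,\,2^{d+2}}$ with one $W$-vertex per subset of $U$, and the whole proof reduces to showing that $2^{d+2}$ distinct ``inside/outside'' patterns with respect to $d+2$ unit balls in $\R^d$ cannot all occur. Where you diverge is in how that count is bounded. The paper lifts the $d+2$ unit spheres to hyperplanes in $\R^{d+1}$ via the parabolic lifting map and invokes the bound $\sum_{i=0}^{d+1}\binom{d+2}{i}=2^{d+2}-1$ on the number of full-dimensional cells of a hyperplane arrangement. You instead pass to the dual picture --- $x\in\mathrm{B}(u_i,1)\iff u_i\in\mathrm{B}(x,1)$, so every realizable subset is cut out of the point set $U$ by a unit ball --- and then apply the VC dimension bound $d+1$ for balls together with Sauer--Shelah, arriving at the identical number $2^{d+2}-1$. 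Both arguments are correct; they are close cousins, since the standard proof that balls in $\R^d$ have VC dimension $d+1$ itself goes through the same lifting. Your route has two small advantages: it is a clean black-box citation, and by working directly with the points at which the $w_X$ are placed it sidesteps the technicality of whether those points sit in full-dimensional cells (which the paper handles via \cref{obs:delta}); the paper's route is more self-contained and makes explicit the arrangement machinery that recurs elsewhere in the section. One presentational nit: you briefly phrase the reduction in terms of full-dimensional cells before switching to arbitrary witness points --- state it directly for the points $x$ realizing each $w_X$ (with closed unit balls and threshold $\delta=1$) and the cell issue disappears entirely.
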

\begin{proof}
  We build a graph~$G=(V,E)$ starting with a set~$V_0$ of~$d+2$ isolated vertices. %(the leaves of the forest). 
  Then for each of the~$2^{d+2}$ many choices of assigning short or long to the vertices of~$V_0$ we add a new vertex and connect it by edges to~$V_0$ accordingly. Observe that~$G$ is bipartite and $(d+2)$-degenerate by construction. Consider a geometric realization of~$G$ and the corresponding arrangement~$\mathcal{A}$ of the~$d+2$ unit spheres centered at the vertices of~$V_0$.
  In order to bound the number of full-dimensional cells in~$\mathcal{A}$, we consider the standard parabolic lifting map~$\lambda$, which orthogonally projects a point in~$\R^d$ to the unit paraboloid in~$\R^{d+1}$. This map has a number of useful properties, see, for instance, Edelsbrunner's book~\cite[Section~1.4]{e-acg-87}. Specifically, the image~$\lambda(S)$ of a sphere~$S\subset\R^d$ lies on a hyperplane~$h_S\subset\R^{d+1}$ such that a point~$p\in\R^d$ lies inside~$S$ if and only if its image~$\lambda(p)$ lies below~$h_S$. It follows that we may regard~$\mathcal{A}$ as an arrangement of~$d+2$ hyperplanes in~$\R^{d+1}$. As such an arrangement has at most~$\sum_{i=0}^{d+1}\binom{d+2}{i}=2^{d+2}-1<2^{d+2}$ many full-dimensional cells~\cite[Proposition~6.1.1]{m-ldg-02},
  for at least one of the vertices in~$V\setminus V_0$ its distance requirements with respect to~$V_0$ are violated. Thus, there is no geometric realization of~$G$ in~$\R^d$.
\end{proof}
%\corthreegenrthreeub*
%\corthreegenrthreeubs*
\begin{corollary}\label{thm:3genr3-ub-s}
  For every~$d\ge 2$, there exists a $(d+1)$-degenerate bipartite dichotomous ordinal graph that is not realizable on~$\Sp^{d-1}$.
\end{corollary}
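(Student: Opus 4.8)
The plan is to mimic the proof of \cref{thm:3genr3-ub} but take the lift into account more carefully so that we gain one dimension. Concretely, I would build a graph $G=(V,E)$ starting from a set $V_0$ of $d+1$ isolated vertices, and then for each of the $2^{d+1}$ ways of designating each vertex of $V_0$ as short or long, add a new vertex joined to $V_0$ accordingly. By construction $G$ is bipartite and $(d+1)$-degenerate. Suppose for contradiction that $G$ has a geodesic realization on~$\Sp^{d-1}$; embed $\Sp^{d-1}$ as the unit sphere in~$\R^d$ in the usual way. For each $u\in V_0$, the set of points of $\Sp^{d-1}$ within geodesic distance less than the threshold from $u$ is an open spherical cap, which is exactly the intersection of $\Sp^{d-1}$ with an open halfspace $H_u\subset\R^d$ whose bounding hyperplane passes through the relevant relative position. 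The pattern realized at a vertex $w\in V\setminus V_0$ is then determined by which side of each of the $d+1$ hyperplanes $\partial H_u$ the point (representing $w$) lies on.

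The key point is that these $d+1$ hyperplanes all pass through a common point, namely the center~$O$ of the sphere: the hyperplane separating the near-cap of $u$ from its complement on $\Sp^{d-1}$ is the one orthogonal to the vector from $O$ to $u$ through~$O$ (for a threshold giving a half-sphere cap) or, more generally, one can argue it is a \emph{central} hyperplane because the cap boundary $\{q\in\Sp^{d-1}\colon \mathrm d_\gamma(u,q)=\delta\}$ is a $(d-2)$-sphere that lies on a hyperplane, and translating coordinates so $O$ is the origin, one checks this hyperplane passes through~$O$ precisely when $\delta=\pi/2$ — so I would first normalize the threshold to $\pi/2$, which on the sphere is legitimate only up to the scaling subtlety noted after \cref{obs:delta}; instead I should argue directly that the $d+1$ cap-bounding hyperplanes, whatever $\delta$ is, form an arrangement of hyperplanes through a common affine subspace (or handle general $\delta$ by a projective/central-arrangement count). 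An arrangement of $d+1$ hyperplanes in $\R^d$ that all contain a common point has the combinatorial structure of an arrangement of $d+1$ hyperplanes through the origin, i.e.\ a central arrangement, which partitions $\R^d$ into at most $2\sum_{i=0}^{d-1}\binom{d}{i}=2(2^{d}-1)=2^{d+1}-2<2^{d+1}$ full-dimensional cells (this is the standard count for central hyperplane arrangements; equivalently one counts regions of the induced arrangement of $d+1$ great $(d-2)$-spheres on~$\Sp^{d-1}$, which is at most $2^{d+1}-2$). Hence not all $2^{d+1}$ sign patterns can be realized, so some vertex of $V\setminus V_0$ has its distance constraints violated, a contradiction.

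The main obstacle I anticipate is justifying that the cap-bounding hypersurfaces really do behave like a \emph{central} hyperplane arrangement rather than a generic one — that is where the $-1$ in the dimension comes from (a generic arrangement of $d+1$ hyperplanes in $\R^d$ has up to $2^{d+1}-1$ cells, not enough savings, which is why the Euclidean bound in \cref{thm:3genr3-ub} only gives $d+2$). The clean way is: work in $\R^d$ with the sphere $\Sp^{d-1}=\mathrm S(O,\rho)$; for a center $u\in\Sp^{d-1}$ the locus $\{q\colon \dist qu = \text{(chord length corresponding to geodesic threshold }\delta)\}\cap\Sp^{d-1}$ lies on a hyperplane $h_u$ orthogonal to $\overline{Ou}$, and the \emph{arrangement of the $h_u$ restricted to the sphere} is what matters; after an affine change one may assume all $h_u$ pass through a common point, and I would cite \cite[Proposition~6.1.1]{m-ldg-02}-type bounds for central arrangements (or simply: a central arrangement of $m$ hyperplanes in $\R^d$ has at most $2\sum_{i=0}^{d-1}\binom{m-1}{i}$ regions). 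With $m=d+1$ this is $2\sum_{i=0}^{d-1}\binom{d}{i}=2^{d+1}-2$. The remaining details — bipartiteness, $(d+1)$-degeneracy, and that the number of distinct required patterns is exactly $2^{d+1}$ — are immediate from the construction, exactly as in \cref{thm:3genr3-ub}. Thus:

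\begin{proof}
  We build a graph~$G=(V,E)$ starting with a set~$V_0$ of~$d+1$ isolated vertices. Then for each of the~$2^{d+1}$ many choices of assigning short or long to the vertices of~$V_0$ we add a new vertex and connect it by edges to~$V_0$ accordingly. Observe that~$G$ is bipartite and $(d+1)$-degenerate by construction. Suppose for a contradiction that~$G$ admits a geodesic realization on~$\Sp^{d-1}$ with some threshold~$\delta$; we regard~$\Sp^{d-1}$ as the unit sphere~$\mathrm S(O,1)$ in~$\R^d$ centered at the origin~$O$. For a vertex~$u\in V_0$ placed at a point~$p_u\in\Sp^{d-1}$, the set of points of~$\Sp^{d-1}$ at geodesic distance less than~$\delta$ from~$p_u$ is the open spherical cap~$\{q\in\Sp^{d-1}\colon \langle q,p_u\rangle > \cos\delta\}$, whose boundary lies on the hyperplane~$h_u=\{x\in\R^d\colon\langle x,p_u\rangle=\cos\delta\}$. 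All these hyperplanes are orthogonal to the respective rays~$\overline{Op_u}$; in particular, after translating~$\R^d$ by the fixed vector~$-\cos\delta\cdot\sum_{u\in V_0}p_u$ is not needed—rather, observe directly that the~$d+1$ hyperplanes~$h_u$ all avoid the origin by the same signed amount along their normals, so the arrangement they induce on~$\Sp^{d-1}$ is combinatorially that of a central arrangement: each~$h_u$ meets~$\Sp^{d-1}$ in a~$(d-2)$-sphere, and these are the boundaries of caps, so two points of~$\Sp^{d-1}$ lie in the same cell of the induced arrangement exactly when they realize the same short/long pattern with respect to~$V_0$. A central arrangement of~$d+1$ hyperplanes in~$\R^d$—equivalently, an arrangement of~$d+1$ great~$(d-2)$-spheres on~$\Sp^{d-1}$—has at most~$2\sum_{i=0}^{d-1}\binom{d}{i}=2(2^{d}-1)=2^{d+1}-2<2^{d+1}$ full-dimensional cells~\cite[Proposition~6.1.1]{m-ldg-02}. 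Hence not all~$2^{d+1}$ patterns can be realized simultaneously, so for at least one vertex in~$V\setminus V_0$ its distance requirements with respect to~$V_0$ are violated, a contradiction. Thus~$G$ is not realizable on~$\Sp^{d-1}$.
\end{proof}
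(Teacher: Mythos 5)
Your graph construction and the reduction to counting cells of the arrangement of the $d+1$ cap-bounding hyperplanes is exactly the paper's approach, but the step where you claim these hyperplanes behave like a \emph{central} arrangement is a genuine gap. The hyperplane bounding the cap around $p_u$ is $h_u=\{x\in\R^d\colon\langle x,p_u\rangle=\cos\delta\}$; for distinct unit vectors $p_u$ these hyperplanes all lie at the same distance $\lvert\cos\delta\rvert$ from the origin, but that does not make them concurrent. Already for $d=2$ and three cap centers with $\cos\delta\ne 0$, the three lines $\langle x,p_u\rangle=\cos\delta$ generically bound a triangle and determine $7=2^{3}-1$ regions of $\R^2$, not $6$. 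So the sentence ``the $d+1$ hyperplanes $h_u$ all avoid the origin by the same signed amount along their normals, so the arrangement they induce on $\Sp^{d-1}$ is combinatorially that of a central arrangement'' is a non sequitur, and the fallback you mention --- normalizing $\delta$ to $\pi/2$ --- is precisely the global rescaling that the paper warns (right after \cref{obs:delta}) is not available on the sphere. Whether the numerical bound $2^{d+1}-2$ happens to hold for arrangements of $d+1$ non-great $(d-2)$-spheres on $\Sp^{d-1}$ is a separate question that your argument does not settle.

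The detour is also unnecessary, and this is where you diverge from the paper: you dismiss the generic bound $\sum_{i=0}^{d}\binom{d+1}{i}=2^{d+1}-1$ for $d+1$ hyperplanes in $\R^d$ as ``not enough savings,'' but $2^{d+1}-1<2^{d+1}$ is exactly enough --- some one of the $2^{d+1}$ required sign patterns is unrealizable, which is all the contradiction needs. That is the paper's argument verbatim. The dimension saved relative to \cref{thm:3genr3-ub} does not come from any centrality of the arrangement; it comes from the fact that spherical caps on $\Sp^{d-1}$ are cut out by affine hyperplanes of the ambient $\R^d$ directly, whereas Euclidean balls in $\R^d$ must first be lifted to halfspaces in $\R^{d+1}$, which costs the extra dimension. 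If you delete the central-arrangement paragraph and replace your cell count by $\sum_{i=0}^{d}\binom{d+1}{i}=2^{d+1}-1<2^{d+1}$ (still citing \cite[Proposition~6.1.1]{m-ldg-02}), your proof becomes correct and coincides with the paper's.
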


\begin{proof}
  We use the same graph~$G$ as in the proof of \cref{thm:3genr3-ub}, except that we start with a set~$V_0$ of~$d+1$ rather than~$d+2$ isolated vertices. We consider~$\Sp^{d-1}$ as a unit sphere~$S\subset\R^d$. For a point~$p\in S$, the set of points on~$S$ in distance at most some fixed unit from~$p$ can be expressed as an intersection~$S\cap u_p$ where~$u_p$ is a halfspace. 
  
  Suppose there is some geometric realization of~$G$, and let~$P\subset S$ denote the set of~$d+1$ points that represent the vertices of~$V_0$ in this realization. Consider the set~$H$ of the~$d+1$ bounding hyperplanes of the halfspaces~$u_p$, for~$p\in P$. The arrangement~$\mathcal{A}$ of~$H$ has at most~$\sum_{i=0}^{d}\binom{d+1}{i}=2^{d+1}-1<2^{d+1}$ many full-dimensional cells~\cite[Proposition~6.1.1]{m-ldg-02}. Thus, for at least one of the vertices in~$V\setminus V_0$ there is no cell that satisfies its distance requirements with respect to~$V_0$. It follows that there is no geometric realization of~$G$ in~$\Sp^{d-1}$.
\end{proof}

In the Euclidean case, we give a tight lower bound in~$\R^2$, albeit not for bipartite graphs.

\begin{figure}[htbp]
    \centering
    \includegraphics[page=1]{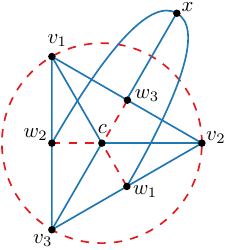} \hfil
    \includegraphics[page=2]{3-degen}    
    \caption{A 3-degenerate graph that cannot be realized in the plane along with a sketch of the proof. Long edges are dashed.  The three points $v_1,v_2,v_3$ would have to lie in the intersection of the disk $B(c,1)$ and a unit disk %of radius 1 
    within $B(w_1,1) \cup B(w_2,1) \cup B(w_3,1)$. But there is not enough space for three points with pairwise distance at least one.}
    \label{fig:3-degen-plane}
\end{figure}

%\begin{nolinenumbers}
%\textcolor{red}{    MH: I prefer the original statement, where the negation is in the last part only, not on the whole statement. We may also want to emphasize that it is not an isolated example because every supergraph cannot be realized, either. \\
%    SC: I changed it. Now it's even shorter. If you still do not like it, change it back if you want and save a line somewhere else.}
%\end{nolinenumbers}

%% MH: Let's go with R^2 only for now
%\thmthreedegtwo*
\begin{theorem}\label{thm:3deg2}
  There exists a~$3$-degenerate 
  %dichotomous ordinal 
  graph that is not % realizable 
  pandichotomous in~$\R^2$.
  %Not every~$3$-degenerate graph is pandichotomous in~$\R^2$.
\end{theorem}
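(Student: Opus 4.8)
The plan is to exhibit a concrete dichotomous ordinal graph $G$ that is $3$-degenerate but has no straight-line realization in $\R^2$, via an obstruction in the spirit of \cref{thm:k47}: a region of the plane that is simultaneously forced to contain three points at pairwise distance at least one and squeezed so tightly by the unit-radius constraint that no such three points fit. Concretely, $G$ has four \emph{central} vertices $c,w_1,w_2,w_3$, three \emph{probe} vertices $v_1,v_2,v_3$, and a batch of auxiliary vertices, each of the latter joined to at most three of $c,w_1,w_2,w_3$. The short/long labels on the auxiliary edges are chosen so that in any realization the arrangement~$\mathcal{C}$ of the four unit disks $\mathrm B(c,1),\mathrm B(w_1,1),\mathrm B(w_2,1),\mathrm B(w_3,1)$ must realize a prescribed list of cells; as in \cref{thm:k47}, the multiple intersections come for free from Helly's Theorem, so no auxiliary vertices need be spent on them. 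The edges at $v_1,v_2,v_3$ are labelled so that each $v_i$ is pinned into a distinct prescribed cell, those three cells are pairwise at distance at least one, and all of them lie in $\mathrm B(c,1)\cap D$, where $D$ is a unit disk that the labels force inside $\mathrm B(w_1,1)\cup \mathrm B(w_2,1)\cup \mathrm B(w_3,1)$; compare \cref{fig:3-degen-plane}.

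First I would fix the labelling and check degeneracy. Ordering $w_1,w_2,w_3$ first, then $c$, then $v_1,v_2,v_3$, then all auxiliary vertices, every vertex other than $w_1,w_2,w_3$ has degree at most three and $w_1,w_2,w_3$ have no earlier neighbours, so this order witnesses $\mathrm d(G)\le 3$; moreover $G$ cannot be $2$-degenerate, since every $2$-degenerate graph is pandichotomous in $\R^2$.

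Second, assume towards a contradiction that $G$ has a realization. Each auxiliary vertex certifies that its prescribed cell of~$\mathcal{C}$ is nonempty; Helly's Theorem supplies the remaining multiple intersections; and, crucially, realizing the private cells of $w_1,w_2,w_3$ together with a cell lying outside all three $w$-disks forces the three disks to be spread apart, so that $\mathrm B(w_1,1)\cap \mathrm B(w_2,1)\cap \mathrm B(w_3,1)$ lies inside the Reuleaux triangle of $w_1w_2w_3$ and any unit disk $D\subseteq \mathrm B(w_1,1)\cup \mathrm B(w_2,1)\cup \mathrm B(w_3,1)$ is anchored near that spread triple. Combining this with the fact that the centre of $\mathrm B(c,1)$ is itself pinned by the labels, I would show that $\mathrm B(c,1)\cap D$ is contained in a disk of radius strictly less than $1/\sqrt3$, using \cref{lem:chord} to control how unit disks cut each other's caps. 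Since three points at pairwise distance at least one span a triangle of circumradius at least $1/\sqrt3$, they cannot lie in such a disk; but $v_1,v_2,v_3$ are forced to be exactly such a triple inside $\mathrm B(c,1)\cap D$ --- a contradiction. Hence $G$ has no realization in $\R^2$.

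The main obstacle is this second step: engineering the auxiliary gadget and the placement of $c,w_1,w_2,w_3$ so that the equal-radius constraint \emph{quantitatively} pinches $\mathrm B(c,1)\cap D$, i.e.\ turning qualitative statements (``these disks pairwise meet'', ``these private regions are realized'', ``this unit disk is covered by three others'') into the metric bound the packing argument needs, and tuning the numerology so that the no-fit conclusion is genuine. As in \cref{thm:k47}, equal radii are essential here: allowing disks of arbitrary radii, the analogous graph is realizable. The geometric tools I expect to use are \cref{lem:chord} and the elementary estimate that three pairwise-unit-separated points cannot lie in a disk of radius below $1/\sqrt3$.
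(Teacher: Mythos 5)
You have correctly identified the construction the paper uses (it is essentially the graph of \cref{fig:3-degen-plane}: a long triangle $v_1v_2v_3$, a centre $c$ with short edges to all of $v_1,v_2,v_3$, three vertices $w_i$ with a long edge to $c$ and short edges to $\{v_1,v_2,v_3\}\setminus\{v_i\}$, and one apex $x$ with short edges to all $w_i$), and you have the right final contradiction: three pairwise-unit-separated points cannot fit into $\mathrm{B}(c,1)\cap D$ once $D$ is pinched. But there is a genuine gap, and it is precisely the step you yourself flag as ``the main obstacle'': you never establish the quantitative pinching of $\mathrm{B}(c,1)\cap D$, and that is where all of the work lives. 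The paper's argument proceeds by first showing $v_i\notin L_i:=\mathrm{B}(c,1)\cap\mathrm{B}(w_i,1)$ (a lens of two unit disks of diameter below $\sqrt3$ holds at most two pairwise-unit-separated points), then proving that the six intersection points of $\partial\mathrm{B}(c,1)$ with the circles $\partial\mathrm{B}(w_i,1)$ must interleave as $1,2,3,1,2,3$ along an arc of $\partial\mathrm{B}(c,1)$ avoiding $\bigcup_i\mathrm{B}(w_i,1)$, and finally sliding a unit disk from $\mathrm{B}(w_2,1)$ towards $c$ until its centre hits $\mathrm{S}(q,1)$, where $q$ is the inner intersection point of $\partial\mathrm{B}(w_1,1)$ and $\partial\mathrm{B}(w_3,1)$; only this sliding construction yields a concrete unit disk $D'$ with all three $v_i$ in the lens $D'\cap\mathrm{B}(c,1)$ of diameter below $\sqrt3$. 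Nothing in your sketch replaces this. Moreover, \cref{lem:chord} will not carry you there (the paper does not use it in this proof), and Helly's Theorem plays no role here, because unlike in \cref{thm:k47} the required cells are not all triples.

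A secondary but real issue is that your construction is under-specified in a way that matters. You want the three probe vertices to end up pairwise at distance at least one, and you propose to achieve this by pinning them into three cells of the four-disk arrangement that are ``pairwise at distance at least one''; but cells of an arrangement of unit circles are in general not separated (exclusive regions of two intersecting disks touch at the intersection points), so edges to $c,w_1,w_2,w_3$ alone cannot force this. The paper instead places long edges directly between $v_1,v_2,v_3$. Once you do that, your proposed elimination ordering ($w$'s, then $c$, then $v$'s, then auxiliaries) no longer witnesses $3$-degeneracy, since $v_2$ would then have four earlier neighbours; the long triangle must come first in the ordering, as it does in the paper.
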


\begin{proof}
  We build a dichotomous ordinal graph~$G$ on eight vertices as follows %; see 
  (\Cref{fig:3-degen-plane}). Start with a set~$V=\{v_1,v_2,v_3\}$ and a complete graph on~$V$, all %of whose 
  edges %are 
  long. Then insert a vertex~$c$ and connect it to all vertices in~$V$ by a short edge. Next, insert a set~$W=\{w_1,w_2,w_3\}$ of vertices as follows: the vertex~$w_i$, for~$i\in\{1,2,3\}$, is connected to~$c$ by a long edge and to each vertex in~$V_i^-:=V\setminus\{v_i\}$ by a short edge. Finally, we add a vertex~$x$ that is connected to all vertices in~$W$ by a short edge. Observe that~$G$ is $3$-degenerate as constructed. 

  Consider any geometric realization~$\Gamma$ of~$G$. In a slight abuse of notation we identify the vertices of~$G$ with the corresponding points that represent them in~$\Gamma$. For a set~$P\subset\R^2$ denote~$I(P)=\bigcap_{p\in P}\mathrm{B}(p,1)$ and~$U(P)=\bigcup_{p\in P}\mathrm{B}(p,1)$. As~$x\in I(W)$, we have~$I(W)\ne\emptyset$ and~$U(W)$ is simply connected. The constraints imposed by~$G$ enforce~$c\notin U(W)$ and~$V\subset\mathrm{B}(c,1)\cap U(W)$. In particular, for each~$i\in\{1,2,3\}$, the set~$V_i^-$ is contained in the lens~$L_i=I(\{c,w_i\})$. As~$c\notin U(W)$, each~$L_i$ has diameter strictly smaller than~$\sqrt{3}$. Any lens in~$\R^2$ of diameter strictly smaller than~$\sqrt{3}$ contains at most two points at a pairwise distance at least one. Thus, in particular, we have~$v_i\notin L_i$, for all~$i\in\{1,2,3\}$. 

  We analyze the interaction of the circles~$S_i=\mathrm{S}(w_i,1)$, for~$i\in\{1,2,3\}$, with~$C=\mathrm{S}(c,1)$. As~$|L_i\cap V|\ge 2$ and as any two points in~$V$ are at distance strictly larger than one, we have~$|S_i\cap C|=2$, for all~$i\in\{1,2,3\}$. As~$c\notin U(W)$ and as~$U(W)$ is simply connected, some part of~$C$ is disjoint from~$U(W)$. We use such a part to break up the circular sequence of intersection points between~$C$ and the~$S_i$ into a linear sequence~$\mathcal{I}$ of six points (some but not all of which may coincide). On the one hand, as~$|L_i\cap V|\ge 2$, for all~$i\in\{1,2,3\}$, and as~$|V|=3$, we have~$L_i\cap L_j\ne\emptyset$, for all~$i,j\in\{1,2,3\}$. On the other hand, if~$L_i\subseteq L_j$, for some~$i,j\in\{1,2,3\}$ with~$i\ne j$, then, given that~$v_j\in V_i$ we also have~$v_j\in L_j$, a contradiction. It follows that up to the indexing of~$W$ we have~$\mathcal{I}=1,2,3,1,2,3$, where we write an integer~$i$ to indicate an intersection~$C\cap S_i$. 

  Let~$q$ denote the intersection point of~$S_1\cap S_3$ in~$\mathrm{B}(c,1)$. Note that~$q\notin L_2$ because~$q\in L_2$ would imply~$L_1\cap L_3\subseteq L_2$ and thus~$v_2\in L_2$, a contradiction. We continuously move a unit disk~$D$ starting from~$D=\mathrm{B}(w_2,1)$ towards~$c$ such that the center~$d$ of~$D$ is on the line segment~$w_2c$. We stop the movement as soon as~$d\in\mathrm{S}(q,1)$. This process is well defined because~$q\notin\mathrm{B}(w_2,1)$ and~$q\in L_1\cap L_3\subset\mathrm{B}(c,1)$. Furthermore, during the whole movement we have~$L_2\subset D\subset U(W)$ and the disk~$D'$ at the end of the movement also contains~$L_1\cap L_3$. Thus, the lens~$L=D'\cap\mathrm{B}(c,1)$ contains all of~$V$ and given that~$L\subset U(W)$ it has diameter strictly smaller than~$\sqrt{3}$, a contradiction.
\end{proof}

The examples discussed above demonstrate that the bound from \cref{thm:3genr3} is tight in the worst case, that is, for \emph{some} graphs. In the following, we will show that the bound is also tight for \emph{all} graphs, up to a multiplicative constant. %% at least asymptotically. 
%In order to show that the pandichotomous dimension is also lower-bounded by a linear function in the degeneracy, w
We first estimate the edge density of pandichotomous graphs in $\R^d$ and show that it is linearly bounded in $d$. In fact, we prove a stronger result, namely that every sufficiently dense graph $G$ not only is not pandichotomous in $\R^d$, but in fact, asymptotically \emph{almost all} partitions of the edge-set of $G$ into short and long edges yield dichotomous ordinal graphs that  are not realizable in $\R^d$. %This is made precise in the following statement.

\begin{theorem}\label{thm:universalindspace}
    Let $d\in\N$, let~$\varepsilon>0$, and let $G=(V,E)$ be a graph with~$n$ vertices and~$m$~edges. Let~$c<7.182$ denote the unique positive root of the function~$x-3-x\cdot H(1/x)$, where~$H\colon(0,1)\to\R$ is the binary entropy function $H(x)=-x\log_2(x)-(1-x)\log_2(1-x)$. 
    
    If~$m> (c+\varepsilon)dn$, then for asymptotically almost all\/\footnote{That is, a fraction of the partitions that tends to $1$ as $dn\rightarrow \infty$.} partitions~$E=E_s\cup E_\ell$ of the edge-set of~$G$, the dichotomous ordinal graph~$(V,E_s\cup E_\ell)$ has no geometric realization in~$\R^d$. In particular, every graph that is pandichotomous in $\R^d$ has at most $(c+o(1))dn$ edges.
    The~$o(1)$ term approaches zero quickly as~$dn\to\infty$. For illustration, we also show that~$m<\mu dn$, for all~$n\in\N$ and~$d\ge 2$ and some explicit constant $\mu<7.23$. %%7.35.
\end{theorem}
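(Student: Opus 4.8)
The plan is to bound the number of \emph{distinct} dichotomous ordinal graphs on the fixed underlying graph $G$ that are realizable in $\R^d$, and to compare this count with the total number $2^m$ of partitions $E=E_s\cup E_\ell$. The key geometric input is that a realization is governed by an arrangement: place the $n$ vertices as points $p_1,\dots,p_n\in\R^d$ and form the arrangement $\mathcal{A}$ of the $n$ unit spheres $\mathrm{S}(p_i,1)$. Lifting via the parabolic map $\lambda$ (as in the proof of \cref{thm:3genr3-ub}) turns $\mathcal{A}$ into an arrangement of $n$ hyperplanes in $\R^{d+1}$, which partitions the vertex set of $G$ — the lifted points $\lambda(p_1),\dots,\lambda(p_n)$ — according to which side of each hyperplane each point lies on. For each edge $uv\in E$, whether $uv$ is short or long in the realized dichotomous ordinal graph is determined by whether $\lambda(p_u)$ lies below the hyperplane $h_{\mathrm{S}(p_v,1)}$ (equivalently $p_u\in\mathrm{B}(p_v,1)$); by symmetry of the unit-distance relation this is the same as the condition for $\lambda(p_v)$ and $h_{\mathrm{S}(p_u,1)}$, so the short/long status of every edge is read off from the sign vector of the $n$ lifted points against the $n$ lifted hyperplanes.

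First I would count, for a \emph{fixed} underlying graph $G$, how many dichotomous ordinal graphs $(V,E_s\cup E_\ell)$ with that underlying graph can arise this way. This reduces to: given $n$ points and $n$ hyperplanes in $\R^{d+1}$, how many combinatorially distinct "which point on which side of which hyperplane" patterns are possible, restricted to the $m$ incidence pairs dictated by the edges of $G$? A clean way to bound this is to argue that the sign pattern is determined by the cell of an arrangement of $O(nd)$ polynomial (in fact linear, after lifting) constraints in the $(d+1)n$ coordinates of the point/hyperplane configuration, and to invoke a Warren/Milnor-type bound on the number of sign patterns of $N$ polynomials of bounded degree in $D$ variables, namely $2^{O(D\log(N/D))}$. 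With $D=\Theta(dn)$ and $N=\Theta(n^2)$ or so this yields a count of the form $2^{O(dn\log n)}$, which is too weak. To get the sharp constant I would instead count more carefully: the realizable dichotomous ordinal graphs inject into the set of (vertex-labelled) orientations one gets from an arrangement of $n$ unit spheres, and the number of full-dimensional cells of such an arrangement is at most $\binom{n}{\le d+1}$ hyperplane-style, but what we actually need is the number of \emph{sign vectors realized by the $n$ specified points}, one per edge — so the object to bound is the number of ways to $2$-colour the edges of $G$ consistently with \emph{some} point–sphere arrangement, and this is at most the number of cells in an arrangement of the $m$ "bisector-type" surfaces in the configuration space, again giving $2^{O(dn\log(m/(dn)))}$ but now with an explicit leading constant.

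Concretely, the bound I aim for is: the number of realizable dichotomous ordinal graphs with underlying graph $G$ is at most $\binom{m}{\le (d+1)n}\cdot 2^{(d+1)n}$ or a similar expression, which by the standard estimate $\binom{m}{\le k}\le 2^{m\cdot H(k/m)}$ is at most $2^{m\cdot H((d+1)n/m)+O(dn)}$. If $m=xdn$, then comparing $2^{m\,H(1/x)+O(dn)}=2^{(x\,H(1/x)+O(1))dn}$ with the total $2^{m}=2^{xdn}$, the realizable fraction tends to $0$ once $x > x\,H(1/x) + (\text{const})$, i.e. once $x-3-x\,H(1/x)>0$; this is exactly why the threshold constant is the root $c<7.182$ of $x-3-x\cdot H(1/x)$, and pushing the lower-order terms through gives the clean statement $m>(c+\varepsilon)dn$. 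For the cruder all-$n$ bound $m<\mu dn$ with $\mu<7.23$, I would just take the same counting inequality, note $2^{m}>(\text{number of realizable d.o.\ graphs})$ is \emph{necessary} for pandichotomy, and solve the resulting transcendental inequality bounding the "$3$" additive slack generously to absorb the non-asymptotic error terms, landing at the slightly larger explicit constant $\mu<7.23$.

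The main obstacle I expect is pinning down the correct counting step with the \emph{right leading constant}: a naive Milnor–Thom bound loses a $\log n$ factor, so one must exploit the specific structure — that each vertex contributes only $d$ free coordinates and each sphere is a unit sphere (one fewer degree of freedom), and that we only care about the $m$ edge-incidences rather than all $\binom{n}{2}$ pairs — to replace the $\log n$ by $\log(m/(dn))=\log x$, which is what makes the entropy function $H(1/x)$ appear and yields a \emph{finite} threshold constant around $7.18$. Getting the bookkeeping of additive $O(dn)$ terms right so that they genuinely collapse into the "$+\varepsilon$" (asymptotically) and into the gap between $7.182$ and $7.23$ (for all $n$) is the delicate part; the rest — the lifting, the symmetry of short/long, and the final "some partition is non-realizable, hence not pandichotomous" conclusion — is routine.
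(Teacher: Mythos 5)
Your proposal follows essentially the same route as the paper: realizable partitions inject into the set of sign patterns of the $m$ degree-two polynomials $\dist{\mathbf{x}_i}{\mathbf{x}_j}^2-1$ in the $dn$ coordinates of the point configuration, a Warren-type bound controls this by roughly $2^{3dn}\binom{m}{dn}$, and the entropy estimate $\binom{m}{dn}\le 2^{mH(dn/m)}$ compared against the total count $2^m$ produces exactly the threshold function $x-3-x\cdot H(1/x)$. The only wobbles are cosmetic: the lifting-map detour is unnecessary (the paper applies Warren's theorem directly in configuration space), and the relevant parameter is $dn$ rather than $(d+1)n$, which is what makes the stated root $c<7.182$ come out uniformly in $d$.
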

In the proof of \cref{thm:universalindspace}, we will use the following %deep
result by Warren~\cite{warren:68} on the number of sign-patterns of a collection of multivariate polynomials. 

\begin{theorem}[{Warren~\cite[Theorem~2]{warren:68}}]\label{thm:warren}
Let $q_1,\ldots,q_m$ be real polynomials in $N$ variables, each of degree at most~$d$. Then the number of connected components of the set 
\[
\left\{\left.\mathbf{x}\in \R^N\:\right|\:\forall i \in \{1,\ldots,m\}: q_i(\mathbf{x})\neq 0\right\}
\]
is at most
\[
2(2d)^N\sum_{k=0}^{N}{2^k\binom{m}{k}}.
\]
\end{theorem}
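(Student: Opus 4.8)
The set in question is $U=\{x\in\R^N : q_i(x)\neq 0\text{ for all }i\}$, the complement of the real hypersurface arrangement $Z=\bigcup_i\{q_i=0\}$. My plan is to bound its number of connected components by a charging argument: assign to each component a distinguished point that lies on only few of the hypersurfaces, and then bound the number of admissible distinguished points by a Bezout-type count. The combinatorial factor $\sum_{k=0}^N 2^k\binom mk$ will arise from the choice of which (at most $N$) hypersurfaces pass through the distinguished point together with a local sign pattern, while the factor $(2d)^N$ will come from a polynomial degree count.

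First I would reduce to general position. By a standard perturbation argument one may assume that the hypersurfaces $\{q_i=0\}$ are smooth, meet transversally, and that at most $N$ of them pass through any common point; one also fixes a generic base point $a\in\R^N$ so that the squared-distance function $\rho(x)=\lVert x-a\rVert^2$ restricts to a Morse function with isolated, nondegenerate critical points on every stratum $\bigcap_{i\in S}\{q_i=0\}$. Since every nonempty cell of $U$ contains an open ball on which all strict sign conditions persist under small perturbations, the number of components can only grow (or stay equal) when passing to the generic configuration; hence an upper bound in the generic case implies the bound in general. This reduction is routine but must be set up carefully.

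Next I define the distinguished point. Because $\rho$ is coercive, it attains its minimum over the closed set $\overline{C}$ of every component $C$ at some point $p_C$. If $a\in C$, then $p_C=a$ is interior, and this single component is accounted for by the $k=0$ term. Otherwise $p_C\in\partial C\subseteq Z$, so $p_C$ lies on $\bigcap_{i\in S}\{q_i=0\}$ for a unique nonempty index set $S=S(C)$ of size $k$ with $1\le k\le N$, at which $p_C$ is a constrained critical point of $\rho$; that is, the vector $p_C-a$ lies in $\mathrm{span}\{\nabla q_i(p_C):i\in S\}$. A single point $p$ on the transversal intersection of the $k$ hypersurfaces indexed by $S$ is adjacent to at most $2^k$ cells of $U$ (one per local sign pattern $\sigma\in\{\pm\}^S$), so $p$ can be the distinguished point of at most $2^k$ components. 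Summing over the $\binom mk$ choices of $S$ therefore produces the combinatorial factor $\sum_{k=0}^N 2^k\binom mk$, provided I can bound, for each fixed $S$, the number of constrained critical points of $\rho$ on $\bigcap_{i\in S}\{q_i=0\}$ by $(2d)^N$.

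The heart of the argument, and the step I expect to be the main obstacle, is precisely this per-subset count. The constrained critical points solve the system $q_i(x)=0$ for $i\in S$ together with the Lagrange condition $(x-a)\in\mathrm{span}\{\nabla q_i(x):i\in S\}$; the difficulty is that a naive elimination of the Lagrange multipliers yields minors whose degrees grow with $k$, which would spoil the clean uniform bound $(2d)^N$. The plan is to reformulate the system so that all defining equations have degree at most $2d$ uniformly in $k$ — for instance by squaring the constraints to degree $2d$ and using that a nondegenerate constrained minimum is an isolated common zero — and then invoke a Bezout/Milnor-type bound asserting that $N$ polynomial equations of degree at most $2d$ in $N$ variables have at most $(2d)^N$ isolated common solutions. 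This yields at most $(2d)^N$ distinguished points per subset $S$; multiplying by the $2^k$ local sign patterns, summing over $S$, and absorbing multiplicity and boundary-at-infinity effects into an overall constant factor $2$ gives $2(2d)^N\sum_{k=0}^N 2^k\binom mk$. Verifying that the reformulated system genuinely has degree $\le 2d$ independently of $k$, and that no component is undercounted, is the delicate part that relies on the general-position setup from the first step.
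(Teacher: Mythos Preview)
The paper does not prove this theorem; it is quoted verbatim from Warren~\cite[Theorem~2]{warren:68} and used as a black box to derive \cref{cor:signpatterns} and then \cref{thm:universalindspace}. So there is nothing to compare your attempt against in this paper.

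That said, your sketch is in the spirit of the Milnor--Thom/Warren style argument, but it is not a complete proof and you correctly flag the weak spot. Two concrete issues deserve attention. First, the claim that a generic perturbation can only increase the number of components of the complement is \emph{not} automatic: moving the hypersurfaces can merge components as well as split them, so one must argue via a one-parameter family and a semicontinuity statement for the number of components (or, as Warren does, work with a product polynomial $Q=\prod_i q_i$ and perturb that). Second, your per-subset bound of $(2d)^N$ on constrained critical points is the crux, and the reformulation you propose (``squaring the constraints to degree $2d$'') does not yield $N$ equations of degree at most $2d$ in $N$ variables; the Lagrange conditions involve $(k+1)\times(k+1)$ minors of a Jacobian whose entries already have degree $d-1$, so the naive degree is $k(d-1)+1$, not $2d$. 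Warren avoids this by a different setup (an inductive projection argument rather than a per-stratum Bezout count), which is what makes the uniform $(2d)^N$ factor come out. If you want to carry your approach through, you would need a genuinely different way to bound the critical locus that does not pass through those high-degree minors.
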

The following is an immediate consequence of the above.

\begin{corollary}\label{cor:signpatterns}
 Let $q_1,\ldots,q_m$ be real polynomials in $N$ variables, each of degree at most~$d$ and let 
 $\mathcal{S}:=\{(\mathrm{sgn}(q_1(\mathbf{x})),\ldots,\mathrm{sgn}(q_m(\mathbf{x}))) \mathop{\big\vert} \forall i \in \{1,\ldots,m\}: q_i(\mathbf{x})\neq 0\} \subseteq \{+1,-1\}^m$. Then we have
 \[
 |\mathcal{S}|\le 2(2d)^N\sum_{k=0}^{N}{2^k\binom{m}{k}}.
 \]
\end{corollary}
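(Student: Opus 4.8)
The plan is to show that the number of realizable sign patterns is bounded by the number of connected components of the open set on which no polynomial vanishes, and then invoke \cref{thm:warren} directly on that set. Concretely, I would set
\[
\Omega := \left\{\mathbf{x}\in\R^N \;\middle|\; \forall i\in\{1,\ldots,m\}\colon q_i(\mathbf{x})\neq 0\right\},
\]
which is precisely the set appearing in \cref{thm:warren}. Every sign vector in $\mathcal{S}$ is by definition of the form $\sigma(\mathbf{x}):=(\mathrm{sgn}(q_1(\mathbf{x})),\ldots,\mathrm{sgn}(q_m(\mathbf{x})))$ for some $\mathbf{x}\in\Omega$, so it suffices to bound the size of the image $\sigma(\Omega)$.

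The key observation is that $\sigma$ is constant on each connected component of $\Omega$. Indeed, fix a connected component $K$ and an index $i$. The polynomial $q_i$ is continuous and, by the definition of $\Omega$, does not vanish anywhere on $K$; a continuous real-valued function that is nowhere zero on a connected set cannot attain both a positive and a negative value (otherwise, restricting to any path connecting two such points and applying the intermediate value theorem would force a zero). Hence $\mathrm{sgn}(q_i(\cdot))$ is constant on $K$, and since this holds for every $i$, the full sign vector $\sigma(\cdot)$ is constant on $K$. Consequently $\sigma$ factors through the (finite) set of connected components of $\Omega$: it induces a well-defined map from components to $\{+1,-1\}^m$ whose image is exactly $\mathcal{S}$. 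This map is surjective onto $\mathcal{S}$, so $|\mathcal{S}|$ is at most the number of connected components of $\Omega$.

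Finally I would apply \cref{thm:warren} to $q_1,\ldots,q_m$ (real polynomials in $N$ variables, each of degree at most $d$), which bounds the number of connected components of $\Omega$ by $2(2d)^N\sum_{k=0}^{N}2^k\binom{m}{k}$, yielding the claimed bound on $|\mathcal{S}|$. There is essentially no obstacle here beyond the constancy-of-sign argument, which is the reason the corollary is labelled an immediate consequence: the only subtlety worth stating explicitly is that the definition of $\mathcal{S}$ already excludes any sign pattern having a zero coordinate, so restricting attention to $\Omega$ loses nothing, and that $\Omega$ has finitely many connected components (guaranteed by \cref{thm:warren} itself) so the counting argument is valid.
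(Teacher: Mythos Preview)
Your proof is correct and follows essentially the same approach as the paper: define the set where no polynomial vanishes, show the sign-vector map is constant on each connected component, and apply Warren's theorem. The only cosmetic difference is that the paper phrases the constancy argument as ``a continuous map from a connected space to a finite set is constant,'' whereas you argue coordinatewise via the intermediate value theorem; these are the same idea.
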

\begin{proof}
Let $X:=\{\mathbf{x}\in \R^N\mathop{\big|}\forall i \in \{1,\ldots,m\}: q_i(\mathbf{x})\neq 0\}$. Consider the map
$f:X\rightarrow \{+1,-1\}^m$, defined by 
$f(\mathbf{x})=(\mathrm{sgn}(q_1(\mathbf{x})),\ldots,\mathrm{sgn}(q_m(\mathbf{x})))$.
It is easy to see, using the continuity of the polynomials $q_1,\ldots,q_m$, that $f$ is %a
continuous. % map. 
Let $C$ be a topological component of $X$. The restriction of $f$ to $C$ is a continuous map on a connected topological space that takes on only finitely many (at most $2^m$) values. Since every such map is constant, we find that $f$ is constant on every component of $X$. Hence, the size of $\mathcal{S}=\mathrm{im}(f)$ is bounded by the number of components of $X$. The statement now follows from \cref{thm:warren}.
\end{proof}

\begin{proof}[Proof of \cref{thm:universalindspace}]
%    We prepare the argument by making some useful definitions. 
%Throughout this proof, we
Assume w.l.o.g.~that $G=(V,E)$ has vertex set $V=\{1,\ldots,n\}$.
    Let $(\mathbf{x}_1,\ldots,\mathbf{x}_n)$ be an $n$-tuple of points in $\R^d$, no two of which are at distance exactly one. The \emph{distance-pattern} of $(\mathbf{x}_1,\ldots,\mathbf{x}_n)$ w.r.t.~$G$ is the map $s_G(\mathbf{x}_1,\ldots,\mathbf{x}_n):E\rightarrow \{+1,-1\}$, where 
    \[
    s_G(\mathbf{x}_1,\ldots,\mathbf{x}_n)(ij):=\begin{cases}
        +1, & \text {if }\dist{\mathbf{x}_i}{\mathbf{x}_j}>1, \cr -1, & \text{ if }\dist{\mathbf{x}_i}{\mathbf{x}_j}<1, 
    \end{cases}
    \]
    for every edge $ij\in E$. In other words, given a configuration of points $(\mathbf{x}_1,\ldots,\mathbf{x}_n)$ in $\R^d$, the distance-pattern captures the information of which edges of $G$ in a geometric realization placing vertices $1,\ldots,n$ at points $\mathbf{x}_1,\ldots,\mathbf{x}_n$, respectively, are long or short. 

    %In the following, 
    Next, we estimate the size of the set $\mathcal{S}$ of all possible distance-patterns for $G$. Concretely, $\mathcal{S}$ is the set of all mappings $s_G(\mathbf{x}_1,\ldots,\mathbf{x}_n)$ where $(\mathbf{x}_1,\ldots,\mathbf{x}_n)$ ranges over all $n$-tuples of points in $\R^d$, no two of which are at distance $1$. 
    %
    %In the following, l
    Let us denote by $M$ the number of partitions $E_s\cup E_\ell=E$ of the edge-set of $G$ such that the dichotomous ordinal graph $(V,E_s\cup E_\ell)$ is realizable in $\R^d$. Let $p$ denote the fraction of such partitions compared to all possible partitions of $E$. Note that $M=p2^m$ and that $p=1$ if and only if $G$ is pandichotomous. %We make the following  observation. 
    \subparagraph{\textbf{Claim 1.}} $|\mathcal{S}|\ge M$. 
    \begin{proof}[Proof of Claim~1.]
    Consider any one of the $M$ partitions $E_s\cup E_\ell$ of the edge-set $E$ for which $(V,E_s\cup E_\ell)$ is realizable in $\R^d$. %After appropriate scaling and perturbation of the set of points used for a geometric realization, 
    By \cref{obs:delta} we may assume %w.l.o.g.~that the 
    a threshold of~$\delta=1$ and that we use a set of points with no unit distances. The distance-pattern of the corresponding tuple of $n$ points in $d$-space then has value $+1$ for every long and $-1$ for every short edge. 
    
    Since any two distinct partitions of $E$ into short and long edges result in different distance-patterns in $\mathcal{S}$ this way, it follows that $|\mathcal{S}|\ge M$, as claimed. 
    \end{proof}

    Next, we obtain an upper bound on $|\mathcal{S}|$ using \cref{thm:warren}. 

    \subparagraph{\textbf{Claim 2.}} 
    $
    \displaystyle{|\mathcal{S}|\le 2\cdot 4^{dn}\cdot \sum_{k=0}^{dn}{2^k\binom{m}{k}}}.
    $
    \begin{proof}[Proof of Claim~2.]
    %Note that f
    For any two points $\mathbf{x}, \mathbf{y} \in \R^d$, we have $\dist{\mathbf{x}}{\mathbf{y}}>1$ if and only if $\dist{\mathbf{x}}{\mathbf{y}}^2-1>0$, and $\dist{\mathbf{x}}{\mathbf{y}}<1$ if and only if $\dist{\mathbf{x}}{\mathbf{y}}^2-1<0$. %Importantly, t
    Further, the expression 
    \[
    \dist{\mathbf{x}}{\mathbf{y}}^2-1=(x_1-y_1)^2+(x_2-y_2)^2+\ldots+(x_d-y_d)^2-1
    \]
    is a polynomial of degree two in the coordinates of $\mathbf{x}, \mathbf{y}$.
    It is therefore natural to associate with every edge $ij \in E$ of $G$ a polynomial $q_{i,j}$, defined as
    $q_{i,j}(\mathbf{x}_1,\ldots,\mathbf{x}_n):=\dist{\mathbf{x}_i}{\mathbf{x}_j}^2-1$. The collection $\left(q_{i,j}\mathop{\big|} ij\in E\right)$ then forms a set of $m$ polynomials, each of degree $2$, in the $N:=dn$ variables corresponding to the coordinates of $\mathbf{x}_1,\ldots,\mathbf{x}_n$. For every tuple $(\mathbf{x}_1,\ldots,\mathbf{x}_n)$ of points in $\R^d$, no two at unit distance, we %then 
    have %that 
    $s_G(\mathbf{x}_1,\ldots,\mathbf{x}_n)(ij)=\mathrm{sgn}(q_{i,j}(\mathbf{x}_1,\ldots,\mathbf{x}_n))$.

    Hence, the set $\mathcal{S}$ defined before coincides with the set $\mathcal{S}$ defined in \cref{cor:signpatterns} for the collection $\left(q_{i,j}\mathop{\big|} ij\in E\right)$ of $m$ multivariate polynomials. We therefore obtain, using the degree bound $2$ for the polynomials,
    \[
    |\mathcal{S}|\le 2\cdot (2\cdot 2)^{N}\cdot \sum_{k=0}^{N}{2^k\binom{m}{k}} =2\cdot 4^{dn}\cdot \sum_{k=0}^{dn}{2^k\binom{m}{k}}.\qedhere
    \]
   % as claimed.
    \end{proof}

    Using Claims~1 and~2, we obtain the inequality
    \[
    p2^m=M\le |\mathcal{S}|\le 2\cdot 4^{dn}\cdot \sum_{k=0}^{dn}{2^k\binom{m}{k}}.
    \]

Since $c>2$, we have $m\ge (c+\varepsilon)dn>2dn$. This %assumption 
implies that the term $\binom{m}{k}$ is monotonically increasing for $k=0,1,\ldots,dn$. Hence, we may bound the right hand side above by
\[
2\cdot 4^{dn}\cdot\binom{m}{dn}\cdot\sum_{k=0}^{dn}2^k
= 
2\cdot 4^{dn}\cdot\binom{m}{dn}\cdot(2^{dn+1}-1)
<
4\cdot 8^{dn}\cdot\binom{m}{dn}.
%%2\cdot 4^{dn}\cdot (dn+1)\cdot  2^{dn}\binom{m}{dn}<
%%4dn8^{dn}\binom{m}{dn}.
%=(8+o(1))^{dn}\binom{m}{dn}.
\]
Let $x:=\frac{m}{dn}\ge c+\varepsilon$. Using the estimate~$\binom{a}{\lambda a}\le 2^{aH(\lambda)}$, which holds for any $a\in\N$ and $\lambda \in (0,1)$, see e.g., \cite[Lemma~9.2]{mu-pcrapa-05}, we find
\[
p2^{xdn}=p2^m
<4\cdot %dn
8^{dn}2^{xdn\cdot H(1/x)}.
%\le (8+o(1))^{dn}2^{xdn\cdot H(1/x)}.
\]
Taking logarithms %with base $2$ 
and dividing by $dn$ this simplifies to
\begin{equation}\label{eq:H}
x<\frac{2%+\log_2(dn)
-\log_2(p)}{dn}+3+x\cdot H(1/x).
\end{equation}
%By the definition of $c$ and since the function $x-3-x\cdot H(1/x)$ is increasing and smooth for $x \ge 2$, this implies $x \le c+o(1)$, as desired, concluding the proof. 

Consider the function~$f(x)=x-3-x\cdot H(1/x) = x -3 -x \log_2x + (x-1)\log_2(x-1)$. Observe that~$f$ is strictly monotonically increasing and continuous for~$x\ge 2$ and that it has a unique positive root~$c\approx 7.1815$. 

\begin{comment}In particular, there is a continuous inverse function~$f^{-1}$ of~$f$, defined for all~$y\ge f(2)=-3$. As~$f^{-1}$ is continuous at~$y=0$, for every~$\varepsilon > 0$ there is an~$n \in \N$ such that~$|f(x)-0|=f(x) < \phi(dn)$ implies~$|f^{-1}(f(x))-f^{-1}(0)|=|x-c|<\varepsilon$. By \eqref{eq:H} we have~$f(x)<\phi(dn)$, and so~$x\to c$ as~$dn\to \infty$. 
\end{comment}
It follows that $f(c+\varepsilon)>f(c)=0$, and, since $x\ge c+\varepsilon$, \cref{eq:H} above implies that 
\[
f(c+\varepsilon)\le f(x)\le \frac{2%+\log_2(dn)
-\log_2(p)}{dn} 
\Longrightarrow
\log_2(p)\le 2%+\log_2(dn)
-f(c+\varepsilon)dn.
\]
%%Rearranging yields that 
%%\[
%%\log_2(p)\le 2+\log_2(dn)-f(c+\varepsilon)dn.
%%\]
Since the right hand side of the last inequality tends to $-\infty$ as $dn\rightarrow \infty$, it follows that $p\rightarrow 0$ for $dn\rightarrow \infty$. Hence, it is indeed true that asymptotically almost all partitions $E_s\cup E_\ell=E$ of the edge-set of $G$ into short and long edges yield dichotomous ordinal graphs with no geometric representation in $\R^d$. This completes the proof of the first statement.

It remains to show that if $G$ is pandichotomous ($p=1$), then the explicit bound~$m\le \mu dn$ with $\mu:=7.2240208$
%%7.3446883
holds for all~$n\in\N$ and~$d\ge 2$. Set~$\phi(z)=2/z$ %$\phi(dn)=(2+\log_2(dn))/(dn)$ 
and observe that~$\phi\to 0$ for~$z\to\infty$. 
We can then express \cref{eq:H} as $f(x)<\phi(dn)$.
 The trivial bound~$m\le\binom{n}{2}$ suffices to show that $m\le \mu dn$, unless~$\mu dn<\binom{n}{2}$, i.e., unless $\lceil 4\mu+1\rceil\leq n$.
%SC: got it :-)
As~$\phi$ is monotonically decreasing, it follows that~$f(x)<\phi(dn)\le\phi(2n)\le\phi(2 \cdot \lceil 4\mu+1 \rceil)$. For the value~$\mu=7.2240208$
%%7.224020798313
%%$\mu=7.3446883$
we obtain~$f(x)<\phi(60)%62<0.1282935
=1/30$, whereas~$f(\mu)>0.033333334>1/30$.
%%0.1282935$. 
Therefore, as~$f$ is monotonically increasing, it follows that~$\frac{m}{dn}=x<\mu$, as claimed.
%\todo[inline]{If you now use in addition that $n$ is an integer, you can even get $c' = 7.345$ :-)  But never mind. $7.347$ is also fine, maybe rather $7.35$ then since we are anyway not so precise. }
%\todo[inline]{SC:
%What was wrong with the argument for $m\leq 7.35 \cdot dn$ for $d\geq 2$ and all $n \in \N$? I guess we can agree that multi-graphs and directed graphs are irrelevant here. So $m\leq (n-1)n/2 \leq 7.35 \cdot 2 n \leq 7.35 \cdot dn$ if $n \leq 4\cdot 7.35 + 1 = 30.4$. On the other hand, if $n \geq 31$ then $\phi(dn) \leq \phi(62) \approx 0.12829 < 0.13$ and $f(7.35) \approx 0.13248 > 0.13$. (If you insist on three digits then $x=7.345$ would still work with the same argument and $f(7.345) \approx 0.128539$)}
\end{proof}

%\begin{corollary}\label{cor:cdbound}
%  For every~$d\in\N$, every graph that is pandichotomous in~$\R^d$ is~$\lfloor 14.7\cdot d\rfloor$-degenerate, i.e., $\mathrm{ped}(G) > d(G)/14.7$ for every graph $G$. 
%\end{corollary}
%\begin{proof}
%  It follows from \cref{thm:universalindspace} that every graph that is pandichotomous in~$\R^d$ has average degree (and thus minimum degree) less than~$(2\cdot 7.35)d$. Since the class of graphs that are pandichotomous in~$\R^d$ are closed under taking subgraphs, this implies that such graphs are also~$\lfloor 14.7 d\rfloor$-degenerate.
%\end{proof}

\begin{corollary}\label{cor:cdbound}
For some absolute constant $\mu<7.23$ we have $\frac{\mathrm{d}(G)}{2\mu}\le \mathrm{ped}(G)\le \mathrm{d}(G)$ for every graph $G$ and $\frac{\mathrm{d}(G)}{2\mu}-1\le \mathrm{psd}(G)\le \mathrm{d}(G)-1$ for every graph $G$ with $\mathrm{d}(G)\ge 2$.
\end{corollary}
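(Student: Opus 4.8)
The plan is to derive both double inequalities directly from \cref{thm:3genr3} and \cref{thm:universalindspace}, using the dense-subgraph characterization of degeneracy as the only new ingredient; in essence the corollary is a repackaging of these two theorems together with the trivial bound $\mathrm{ped}(G)\le\mathrm{psd}(G)+1$.

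For the \emph{upper} bounds, note that a graph $G$ with $\mathrm{d}(G)=k\ge 2$ is $k$-degenerate, so \cref{thm:3genr3} says at once that $G$ is pandichotomous in $\R^{k}$ and on $\Sp^{k-1}$; that is, $\mathrm{ped}(G)\le\mathrm{d}(G)$ and $\mathrm{psd}(G)\le\mathrm{d}(G)-1$. For $\mathrm{ped}(G)\le\mathrm{d}(G)$ it remains to treat $\mathrm{d}(G)\in\{0,1\}$: if $\mathrm{d}(G)=0$ then $G$ is edgeless and $\mathrm{ped}(G)=0$, and if $\mathrm{d}(G)=1$ then $G$ is a forest, which is pandichotomous in $\R^1$ (root each tree and place every non-root vertex on the line at distance $\frac12$ or $2$ from its unique, already-placed parent according to whether that edge is short or long; no constraint ever involves two non-adjacent vertices, and one can perturb to avoid unit distances), so $\mathrm{ped}(G)\le 1$.

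For the \emph{lower} bounds, put $k=\mathrm{d}(G)$ and $d=\mathrm{ped}(G)$. By the standard characterization of degeneracy, $G$ contains a nonempty subgraph $H$ with $\delta(H)\ge k$, so $|E(H)|\ge\frac{k}{2}|V(H)|$ by the handshaking lemma. Since pandichotomy is a monotone property and $G$ is pandichotomous in $\R^{d}$, so is $H$. If $d\ge 2$, \cref{thm:universalindspace} gives $|E(H)|<\mu d\,|V(H)|$, and comparing the two estimates yields $\frac{k}{2}<\mu d$, i.e.\ $\mathrm{d}(G)<2\mu\,\mathrm{ped}(G)$. The cases $d\le 1$ are handled directly: $d=0$ forces $G$ edgeless ($k=0$), and for $d=1$ one checks that graphs pandichotomous in $\R^1$ have bounded degeneracy by rerunning the Warren estimate of \cref{cor:signpatterns} on $H$ (invoking the trivial bound $|E(H)|\le\binom{|V(H)|}{2}$ when $|V(H)|$ is below an absolute constant); a short calculation gives $\mathrm{d}(G)\le 14<2\mu$ in that case. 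Hence $\mathrm{d}(G)/(2\mu)\le\mathrm{ped}(G)$ in all cases.

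Finally, the spherical lower bound is immediate from the Euclidean one and the trivial inequality $\mathrm{ped}(G)\le\mathrm{psd}(G)+1$ from the preliminaries: $\mathrm{psd}(G)\ge\mathrm{ped}(G)-1\ge\mathrm{d}(G)/(2\mu)-1$. There is no genuinely difficult step here; the one point that needs care is that \cref{thm:universalindspace} supplies the explicit constant $\mu$ only for $d\ge 2$, so the one-dimensional borderline case of the lower bound must be argued separately — which is routine, since graphs that are pandichotomous on a line are extremely restricted.
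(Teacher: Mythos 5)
Your proof is correct and follows essentially the same route as the paper: the upper bounds come from \cref{thm:3genr3}, the Euclidean lower bound from combining the edge-density bound of \cref{thm:universalindspace} with monotonicity of pandichotomy to bound the degeneracy (the paper phrases this via the minimum degree of $G$ itself rather than via a densest subgraph $H$, but the two are equivalent), and the spherical lower bound from $\mathrm{psd}(G)\ge\mathrm{ped}(G)-1$. The only difference is that you explicitly patch the borderline cases $\mathrm{d}(G)\le 1$ and $\mathrm{ped}(G)=1$ (where \cref{thm:universalindspace} states the explicit constant $\mu$ only for $d\ge 2$), which the paper's own proof silently glosses over; your treatment of these cases is correct and, strictly speaking, needed.
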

\begin{proof}
  By \cref{thm:universalindspace} every graph~$G$ that is pandichotomous in~$\R^d$ has at most~$\mu dn$ edges, for some absolute constant~$\mu<7.23$. Thus, the minimum degree of~$G$ is at most~$\lfloor 2\mu d\rfloor$ and so~$G$ is~$k$-degenerate, for~$k=\lfloor 2\mu d\rfloor$. This implies $\mathrm{d}(G)\le 2\mu\cdot \mathrm{ped}(G)$ and so $\frac{
\mathrm{d}(G)}{2\mu}\le \mathrm{ped}(G)\le \mathrm{d}(G)$. Since $\mathrm{psd}(G)\ge \mathrm{ped}(G)-1$ we obtain the lower bound for the spherical dimension. The upper bounds follow from \cref{thm:3genr3}. 
\end{proof}

\begin{corollary}\label{cor:3genr3-ub}
Every $n$-vertex graph is pandichotomous in $\mathcal{S}^{n-2}$ and $\R^{n-1}$. But there exists an absolute constant $\mu<7.23$ such that for every $d<\frac{n-1}{4\mu}$ there exist $n$-vertex bipartite graphs that are not pandichotomous in~$\R^d$ or~$\mathcal{S}^{d-1}$.
\end{corollary}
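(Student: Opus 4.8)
The plan is to read both assertions off from results already in hand: the first from \cref{thm:3genr3}, and the second from \cref{cor:cdbound} applied to a balanced complete bipartite graph. For the first assertion, I would note that every $n$-vertex graph $G$ is $(n-1)$-degenerate — in any vertex ordering, each vertex has at most $n-1$ earlier neighbours, so $\mathrm{d}(G)\le n-1$ — and then invoke \cref{thm:3genr3} with $d=n-1$ (which is at least $2$ for $n\ge 3$) to conclude that $G$ is pandichotomous on $\Sp^{n-2}$ and in $\R^{n-1}$. The cases $n\le 2$ are trivial.

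For the second assertion, fix $n$ and consider $K:=K_{\lfloor n/2\rfloor,\lceil n/2\rceil}$, which is bipartite on $n$ vertices. Since $K$ contains $K_{\lfloor n/2\rfloor,\lfloor n/2\rfloor}$ as a subgraph, and the latter has minimum degree $\lfloor n/2\rfloor$, we get $\mathrm{d}(K)\ge\lfloor n/2\rfloor\ge(n-1)/2$. Let $\mu<7.23$ be the constant of \cref{cor:cdbound} and let $d<\frac{n-1}{4\mu}$. If $\lfloor n/2\rfloor\le 1$, that is $n\le 3$, then $\frac{n-1}{4\mu}<1$, so there is no positive integer $d$ in the asserted range and nothing is claimed; hence I may assume $\mathrm{d}(K)\ge 2$ and apply \cref{cor:cdbound} to $K$, which yields
\[
\mathrm{ped}(K)\ \ge\ \frac{\mathrm{d}(K)}{2\mu}\ \ge\ \frac{n-1}{4\mu}\ >\ d,
\qquad
\mathrm{psd}(K)\ \ge\ \frac{\mathrm{d}(K)}{2\mu}-1\ \ge\ \frac{n-1}{4\mu}-1\ >\ d-1.
\]
Since $\mathrm{ped}(K)$ is, by definition, the smallest dimension in which $K$ is pandichotomous in Euclidean space, $d<\mathrm{ped}(K)$ means $K$ is not pandichotomous in $\R^{d}$; similarly, $d-1<\mathrm{psd}(K)$ means $K$ is not pandichotomous on $\Sp^{d-1}$. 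As $K$ is an $n$-vertex bipartite graph, this is exactly the claim.

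The argument is routine. The only two points needing attention are that the hypothesis $\mathrm{d}(G)\ge 2$ of \cref{cor:cdbound} is satisfied, which is automatic since the stated range of $d$ is empty whenever $n$ is small, and that one must retain the factor $1/2$ lost in the estimate $\lfloor n/2\rfloor\ge(n-1)/2$, since this is precisely what produces the threshold $\frac{n-1}{4\mu}$ — rather than $\frac{n}{4\mu}$ — in the statement.
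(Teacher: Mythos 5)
Your proof is correct and follows essentially the same route as the paper: the first claim from $(n-1)$-degeneracy via \cref{thm:3genr3}, and the second by applying \cref{cor:cdbound} to $K_{\lfloor n/2\rfloor,\lceil n/2\rceil}$ using its minimum degree $\lfloor n/2\rfloor\ge(n-1)/2$. You are merely a bit more explicit than the paper about the small-$n$ edge cases and about spelling out the spherical half of the conclusion.
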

\begin{proof}
Since $n$-vertex graphs are $(n-1)$-degenerate, the first statement is a direct consequence of \cref{thm:3genr3}. 
The complete bipartite $n$-vertex graph $K_{\lfloor n/2\rfloor, \lceil n/2\rceil}$ has minimum degree $\lfloor n/2\rfloor$. Hence, \cref{cor:cdbound} implies that $K_{\lfloor n/2\rfloor, \lceil n/2\rceil}$ is not pandichotomous in $\R^d$ for any dimension $d$ such that $2\mu d<\lfloor n/2 \rfloor$, so in particular for $d<(n-1)/(4\mu)$. 
%Thus, one of the associated dichotomous ordinal graphs is not realizable in $\R^d$ (and not on $\mathcal{S}^{d-1}$).
\end{proof}

\section{Conclusion}

%We have started to study pandichotomous graphs with an emphasis on bipartite graphs on one hand and the relationship between the degeneracy and the pandichotomous dimension on the other hand. 
We %have started to study 
study pandichotomous graphs with an emphasis on bipartite graphs on one hand and the relationship between the degeneracy and the pandichotomous dimension on the other hand. Some interesting open questions remain, such as:
\begin{itemize}
    \item Is every planar graph, planar $3$-tree, or planar bipartite graph pandichotomous in $\R^2$?
    \item Are bipartite dichotomous ordinal graphs realizable in $\R^2$ if the short graph is a $2$-tree?
    \item Characterize the (complete) bipartite graphs that are pandichotomous in~$\R^k$, for~$k\ge 3$.
    \item Given $d \geq 3$, is there a bipartite $d$-degenerate graph that is not pandichotomous in $\R^{d-1}$? We believe that the answer to this question should be positive.
    %, and have a concrete candidate for the $d=3$ case, see~\cref{fig:3degBipGraph}.
\end{itemize}
{

\bibliographystyle{plainurl} 
\bibliography{references}

\end{document}